\documentclass[a4paper,twocolumn,11pt,unpublished]{quantumarticle}
\pdfoutput=1
\usepackage[utf8]{inputenc}
\usepackage[english]{babel}
\usepackage[T1]{fontenc}
\usepackage{amsmath,amsthm}
\usepackage[shortlabels]{enumitem}
\usepackage{hyperref}
\usepackage{xcolor,soul}
\sethlcolor{lightgray}
\makeatletter
\newcommand{\mylabel}[2]{#2\def\@currentlabel{#2}\label{#1}}
\makeatother
\newtheorem{prop}{Proposition}
\renewenvironment{proof}{{\bf \emph{Proof.} }}{\hfill $\Box$ \\} 
\definecolor{purplee}{rgb}{0.5,0.0,0.5}

\hypersetup{
    colorlinks=true,
    urlcolor=purplee,
    citecolor=purplee,
    linkcolor=purplee
}
\usepackage{amssymb}
\usepackage{tikz}
\usepackage{lipsum}
\usetikzlibrary{external}
\usepackage{booktabs}
\usetikzlibrary{quantikz}
\usepackage{multirow}
\usepackage{longtable}

\newtheorem{theorem}{Theorem}

\begin{document}

\title{Exhaustive Optimisation of Automorphism Groups for Stabiliser Codes}

\author{Aisling Mac Aree}
\email{a.macaree1@universityofgalway.ie}
\orcid{0009000725201289}
\author{Mark Howard}
\email{mark@markhoward.info}
\affiliation{School of Mathematical and Statistical Sciences, University of Galway, Galway, Ireland, H91 TK33}
\orcid{000000026910185X}
\affiliation{School of Mathematical and Statistical Sciences, University of Galway, Galway, Ireland, H91 TK33}

\maketitle

\begin{abstract}
    An important measure of utility for a quantum code is the identification of \textit{which} logical operations can be implemented fault-tolerantly on its codespace. We introduce a framework which leverages the automorphism groups of associated classical codes, the choice of logical basis and exploitation of code equivalence to construct all distinct implementable realisations of each valid logical operation for a given $[[n,k,d]]$ code. We establish conjugacy classes and group transversals (unrelated to transversality) as key explanatory concepts. We subsequently motivate and calculate two figures-of-merit that can be optimised with this framework. Our results yield a table of optimal logical operations and their corresponding physical circuits for all small stabiliser codes with $n \leq 7$ and $k \leq 2$, drawn from quantum databases. This exhaustive table of results provides the optimal physical implementations of logical operations which may be advantageous for both magic state cultivation and experimental purposes.
\end{abstract}

\section{Introduction}\label{sectino intro overall}

Fault-tolerant logical operations are essential for scalable quantum computing. However, the details of which logical operations are available via transversal or SWAP-transversal physical circuits—and how costly these physical operations are—depend on the specific symmetries of the code being used, as well as the chosen cost metric. As the field of quantum error correction has matured past quantum memories, researchers are increasingly concerned with the practicalities of implementing logical gates on the codespace. Motivating much of this research is the fact that, for a given stabiliser code, there are vast families of physical circuits that all implement the same logical unitary. There have been a number of works that are relevant and complementary to the material we outline below. 

Grassl and Rötteler \cite{roetteler} highlighted that the ``SWAP'' part of SWAP-transversal logical gates can lead to reductions in overhead; qubit permutations can be achieved at zero cost via qubit relabelling yet still lead to nontrivial logical gates. Bravyi et al. \cite{Bravyigrosscode} incorporated similar insights using the code symmetries (automorphism groups) of Bivariate Bicycle LDPC codes. A recent paper by Koh et al.~\cite{Koh2026PhantomCodes} extends these ideas, finding that---for quantum codes with a particular feature \textit{(phantom codes)}---all in-block logical entangling gates vanish from the physical circuit after classical circuit compilation and therefore introduce zero overhead and perfect fidelity.
Rengaswamy et al.~\cite{rengaswamy} developed a framework for converting the problem of physical Clifford circuits (implementing a target logical Clifford) into systems of binary equations. Kuehnke et al.~\cite{kuehnke2025hardwaretailoredlogicalcliffordcircuits} further transformed the problem into an integer quadratically constrained program, allowing them to find physical circuits (with e.g., minimal CZ cost) for logical Cliffords. Sayginel et al.~\cite{sayginel} used code and matrix automorphisms to develop a theoretical and software framework for the exhaustive optimisation of any given stabiliser code. Notably, they also rigorously address the Pauli corrections that are lost in the process of representing Clifford unitaries with elements of the binary symplectic group.

This work develops a theoretical approach that captures all possible degrees of freedom, and subsequently cuts down the search space by virtue of structural insights. These mathematical results will likely find application elsewhere. Our work aims to be exhaustive for small stabiliser codes in the sense that we use all small stabiliser codes listed in the QEC database \cite{QECDB}, codetables.de \cite{Codetable} by Markus Grassl, and those listed in the 2025 paper ``Small Binary Stabiliser Subsystem Codes'' by Cross and Vandeth \cite{CrossVandeth2025}. We use these small stabiliser codes as a starting point and, more importantly, we optimise over every degree of freedom that can affect the physical realisation of a logical gate: (i) we optimise over every possible choice of logical $X$ and $Z$ representatives, and (ii) we optimise over every equivalent version of the starting stabiliser code. In the former case, the notion of conjugacy classes helps clarify the relevant mathematical structure and constrain the search space. In the latter case, the notion of a group transversal (distinct from transversal gates) provides the relevant mathematical approach.

We optimise over choice of logical $X$ and $Z$ representatives and equivalent codes based on $2$ independent figures-of-merit. For a given physical circuit $\pi$, the number of SWAP gates and non-identity Clifford operations are denoted by $|\pi_{\sigma}|$ and $|\pi_{e}|$ respectively. We optimise independently over the following two metrics,  
\begin{enumerate}[(i)]
    \item \textbf{Metric 1}, \emph{Control-Clifford Cost}: $7|\pi_{\sigma}| + |\pi_{e}|$
    \item \textbf{Metric 2}, \emph{Local Clifford Cost}: $|\pi_{e}|$.
\end{enumerate}
Broadly speaking, metric $1$ heavily penalises SWAPs whereas metric $2$ counts SWAPs as \textit{free}.

\textit{Controlled-Clifford Cost} (Metric $1$): We minimise a cost associated with measuring a particular logical Clifford given by a stabiliser code. The rationale for doing so stems from magic state distillation (MSD) and cultivation (MSC) \cite{Gidney} protocols. Originating with Knill's $2004$ MSD \cite{Knill2004} scheme and finding recent prominence as the first step in MSC proposals, it is of great interest to find physical circuits that enable non-destructive measurement of logical Clifford observables. For example, within the Steane code, measurement of the Hadamard observable (and hence, preparation of its $\ket{T}$ eigenstate) can be achieved by measuring physical control-Hadamards, by virtue of the fact that logical $H_{L}$ can be implemented by physical $H^{\otimes 7}$. Deleterious aspects of a physical circuit implementation could be:

\begin{enumerate}[i)]
    \item too many local (single-qubit) Clifford gates (since controlling these necessitate non-Clifford gates and hence introduce noise)
    \item SWAPs on physical qubits. When measuring the logical Clifford, these SWAPs become control-SWAPs which necessitate $7$ $T$ gates each (via unitary synthesis, or $4$ non-unitarily \cite{PhysRevA.87.022328, PhysRevLett.118.090501, gidney2018halving}) and moreover, introduce complicated error patterns. For this reason a suitable metric is one that heavily penalises SWAPs (the coefficient $7$ is thus somewhat arbitrary) and only then minimises local Cliffords as much as possible.
\end{enumerate}

\textit{Local Gate Count} (Metric $2$): We minimise the number of physical single-qubit gates while treating qubit permutations (SWAPs) as zero-cost. This metric prioritises the ``zero-cost'' transversal gates identified by Grassl and R\"{o}tteler \cite{roetteler} and has been experimentally validated by Abobeih et al.~\cite{abobeih} (see e.g. Fig. $4$ therein) who demonstrated fault-tolerant operations on a diamond processor by specifically exploiting this permutation-based approach. This low---or no---cost perspective also forms the basis of phantom codes recently developed in \cite{Koh2026PhantomCodes}. We envisage that our database of logical gates with low---or zero---cost physical Cliffords will inform experimental design and find further use in topics like circuit synthesis. 

As application of the theory, our work optimises over the two aforementioned metrics for every automorphism-induced logical operation for all small stabiliser codes (as per \cite{Codetable}, \cite{QECDB}, and \cite{CrossVandeth2025}). Our methodology can be outlined as follows:
\begin{enumerate}
    \item Generate the full automorphism group $\Gamma(\mathcal{C})$ of the classical code $\mathcal{C} \subset \mathrm{GF}(4)^n$ associated with each stabiliser code.
    \item Exploit the freedoms allowed by logical basis choice and code equivalence to generate all distinct physical circuits corresponding to each class of logical transformation.
    \item Optimise each resulting circuit independently to minimise metric $1$ and metric $2$.
\end{enumerate}

Our framework has yielded a results table of optimal implementations, in terms of both of the independent cost metrics, of logical circuits for small stabiliser codes. This table of results can be found in section \ref{section results}.

\subsection{Background of algebraic framework}\label{section intro background}
The Pauli error group on $n$ qubits, denoted $E_{n}$, consists of tensor products of the single-qubit Pauli matrices $\{I,X,Y,Z\}$. The stabiliser group $S \leq E_{n}$\footnote{The stabiliser group $S$ unfortunately shares its notation with the $S$ gate which, in symplectic form, is given by $S= \begin{bmatrix}
    1 & 0 \\
    1 & 1
\end{bmatrix}$. It also bears resemblance to the notation used to describe the symmetric group on $n$ qubits, $S_{n}$. However, in this paper, whether $S$ is referring to the stabiliser group, the $S$ gate or the symmetric group will be clear from context.} is comprised of Pauli operators which leave the code space invariant. We define the stabiliser code $Q(S)$ as the simultaneous $+1$ eigenspace of $S$, which has dimension $2^{k}$ and is generated by a set of $n-k$ mutually commuting independent generators \cite{calderbank1997quantum, Gottesman1997}. The set of logical Pauli operators associated with a given stabiliser code, $Q(S)$, can be identified with the normalizer of the stabiliser group, $N(S)$ \cite{Gottesman1997}. To facilitate algebraic analysis of these codes, it is convenient to quotient out global phases, yielding $\overline{E_{n}}:=E_{n}/\{\pm 1, \pm i\} \cong GF(2)^{2n}$, where elements of this group can be expressed as length $2n$ binary vectors $(x|z)=(x_{1}, \dots, x_{n}|z_{1}, \dots, z_{n}) \longleftrightarrow  (X^{x_{1}} \otimes \dots \otimes X^{x_{n}})(Z^{z_{1}} \otimes \dots \otimes Z^{z_{n}})$. 

Following Calderbank et al.'s well-established formalism \cite{calderbank1997quantum}, we can describe stabiliser codes using additive codes $\mathcal{C} \subset GF(4)^{n}$ where the finite field $GF(4)$ has elements $\{0,1,\omega, \omega^{2}=\bar{\omega} = \omega+1 \}$. This methodology hinges on the existence of a map, $\phi$: \begin{align}\label{eqn phi} \begin{split}
    \phi: GF(4)^{n}& \rightarrow GF(2)^{2n}, \\ \phi(x + \omega z)& = (x|z)
    \end{split}
    \end{align}
which maps classical generators $\mathcal{G}$ to symplectic representations of stabiliser generators $\phi(\mathcal{G})=G$ \footnote{In this paper, we have chosen the convention that objects in $GF(4)^{n}$ are represented calligraphically e.g., $\mathcal{G}, \mathcal{B}, \mathcal{C}$ etc., whereas the image of these objects under $\phi$ which reside in $GF(2)^{2n}$ are represented by plaintext e.g. $G,B,C$ etc.}.

If $\mathcal{C}$ is additive and self-orthogonal, then $\phi(\mathcal{C})=C$ will consist of the binary representation of a commuting set of Pauli operators. Additionally, the corresponding code, $Q(S)$, will have parameters $[[n, n-k, d]]$, where $k=dim(\mathcal{C})$. Thus, given a generator matrix $\mathcal{G} \in GF(4)^{(n-k) \times n}$ for a classical code $\mathcal{C}$, which satisfies the aforementioned conditions, its image under $\phi$, \begin{align}\label{C isom to Sbar} C \cong \overline{S} \leq \overline{E_{n}},\end{align} is generated by $\langle G \rangle$ \cite{calderbank1997quantum}. 

In the symplectic representation, two operators commute if and only if their images under $\phi$ are orthogonal with respect to the \textit{symplectic inner product}. We define the symplectic inner product between $(x|z), (x'|z') \in \mathbb{F}_{2}^{2n}$ as: \begin{align}\langle (x|z), (x'|z')\rangle_{s} = x \cdot z' + z \cdot x' \mod 2.\end{align}

Symplectic matrices, $F \in Sp(2n,2)$, satisfy the symplectic criteria, $F\Omega_{2n} F^{T} = \Omega_{2n}$, such that $\Omega_{2n}= \begin{bmatrix}
    0 & \mathbb{I}_{n} \\
    \mathbb{I}_{n} & 0
\end{bmatrix}$. These $F \in Sp(2n,2)$ preserve the symplectic inner product and map the symplectic representations of Pauli operators to one another.  

The symplectic matrix $F \in Sp(2n,2)$ can be decomposed into a circuit consisting of \textit{Clifford operators}. The single-qubit Clifford group (modulo phases) consists of $24$ elements which can be decomposed into products of the $6$ pure symplectic operators $(I,H,S,HS,SH,HSH)$ and the $4$ single-qubit Pauli gates $(I,X,Y,Z)$\footnote{See section \ref{subsection logical action} for reference to how the partitioning of the Clifford group pertains to \textit{Pauli-correction}.} \cite{AaronsonGottesman, sayginel}. The set of implementable logical Clifford operators on $Q(S)$ coincides with the automorphism group of the associated classical code $\mathcal{C}$. These automorphism-induced logical operators form a subgroup of the symplectic group $Sp(2k,2)$ and are referred to as \textit{SWAP-transversal} gates, meaning they include qubit permutations as well as the traditional weakly transversal implementation of single-qubit Clifford gates \cite{sayginel, Gottesman1997}.  

\section{Logical Operations and Corresponding Physical Circuits}
\subsection{Logical action}\label{subsection logical action}
As noted in Section \ref{section intro background}, the set of logical operators implementable on $Q(S)$ by SWAP-transversal circuits coincides with the automorphism group of the associated classical code $\mathcal{C}$. We denote this automorphism group as $\Gamma(\mathcal{C})$ \cite{calderbank1997quantum, Gottesman1997}. 

We can represent elements of the automorphism group $\Gamma(\mathcal{C})$ as $\pi \in S_{3n}$. As per the Magma algebra system \cite{MR1484478} \textit{``the automorphism group of a length $n$ additive stabiliser code over $F_{4}$ is a subgroup of $Z_{3} \wr S_{3n!}$. However, the automorphism group of the quantum code it generates is a subgroup of $S_{3} \wr S_{n}$ of order $3!n!$ $\dots$ in Magma automorphisms are returned as $\dots$ length $3n$ permutations for the full monomial code''}. Magma output $\pi \in S_{3n}$ can be converted to $(\sigma;\rho_{1}, \dots, \rho_{n}) \in (S_{3})^{n} \rtimes S_{n}$ by splitting the permutation $\pi$ into $n$ blocks of $3$ elements. The ordering of the $n$ blocks gives the global permutation $\sigma$. The ordering of the elements within each of the $n$ blocks corresponds to one of the $6$ single Clifford actions, $\rho_{i}$. Appendix \ref{subsec example converting pi to fpi} gives an explicit example of this conversion from $S_{3n}$ to $(S_{3})^{n} \rtimes S_{n}$. We represent the fact that $(\sigma;\rho_{1}, \dots, \rho_{n}) \in (S_{3})^{n} \rtimes S_{n}$ corresponds to $\pi \in S_{3n}$ by $$\pi \eqsim (\sigma;\rho_{1}, \dots, \rho_{n}).$$

Equation \ref{C isom to Sbar}, combined with the action of the automorphism group $\Gamma(\mathcal{C}): \mathcal{C} \rightarrow \mathcal{C}$, means that the action of $\pi \in \Gamma(\mathcal{C})$ corresponds to that of a Clifford gate. This property allows us to define a physical symplectic matrix, $F_{\pi}$, which maps the vectors corresponding to binary representation of the stabilisers onto themselves, \begin{align}F_{\pi}: C \rightarrow C.\end{align} 
 
Let the mapping from an automorphism to its associated symplectic matrix be described by a homomorphism $\alpha$ \cite{Gottesman1998}, \begin{align}\label{eqn alpha:m pi to Fpi} \begin{split}
     \alpha:\Gamma(\mathcal{C})& \rightarrow Sp(2n,2),\\ \alpha(\pi)& = F_{\pi}.
     \end{split}\end{align}   This mapping is defined by the action of $\pi$ on the vector $x + \omega z \in GF(4)^{n}$, which must correspond to the action of the physical Clifford $F_{\pi}$ on the corresponding Pauli vector $(x|z) \in GF(2)^{2n}$: \begin{align}\label{eqn pi to Fpi^T}\phi(\pi(x + \omega z)) =(x|z)F_{\pi}^{T}.\end{align} The full construction of this homomorphism is provided in Appendix \ref{appx symplectic construction}. 

In order to consider the logical action of an automorphism $\pi \in \Gamma(\mathcal{C})$ on the encoded space, we must first establish a \textit{logical basis} $\mathcal{B} \in \mathbb{F}_{4}^{2k \times n}$, such that $\mathcal{B}_{i} \in \mathcal{C}^{\perp}/\mathcal{C}$ (where $\mathcal{B}_{i}$ denotes row $i$ of $\mathcal{B}$) corresponds to the logical $\mathcal{X}$ or $\mathcal{Z}$ type Pauli operations. This is related to the more common binary form via $\phi(\mathcal{B}) = B$, \footnotesize \begin{align}\label{eqn B arrangement} \mathcal{B} = \begin{bmatrix}
    \mathcal{X}_{L}^{(1)} \\
    \vdots \\
    \mathcal{X}_{L}^{(k)}\\
    \mathcal{Z}_{L}^{(1)}\\
    \vdots \\
    \mathcal{Z}_{L}^{(k)}
\end{bmatrix} \in GF(4)^{2k \times n}, \text{ } B = \begin{bmatrix}
    X_{L}^{(1)} \\
    \vdots \\
    X_{L}^{(k)} \\
    Z_{L}^{(1)} \\
    \vdots \\
    Z_{L}^{(k)}
\end{bmatrix} \in GF(2)^{2k \times 2n}. \end{align} 

\normalsize

A basis $\mathcal{B} \in GF(4)^{2k \times n}$ is considered a valid symplectic basis for a given $[[n,k,d]]$ code if and only if it satisfies:
\begin{align}\label{eqn symplectic basis is symplectic}
\mathcal{B} \odot \mathcal{B}^T=\langle B,B \rangle_s=B \Omega_{2n} B^T=\Omega_{2k},
\end{align}
where the product $\odot$ between any two matrices $\mathcal{U},\mathcal{V} \in GF(4)^{m \times n}$, is described as 
\begin{align*}
    \left[\mathcal{U}\odot  \mathcal{V}^T\right]_{i,j}=Tr(u_i \cdot \overline{v_j}),
\end{align*} where $u_{i}$ represents row $i$ of $\mathcal{U}$ and $\overline{v_{j}}$ denotes the conjugate of row $j$ of $\mathcal{V}$. 
Note that the field-theoretic trace satisfies $Tr(0)=Tr(1)=0$ and $Tr(\omega)=Tr(\omega^2)=1$.

To capture the dependence of the induced logical action on $\pi \in \Gamma(\mathcal{C})$ and $\mathcal{B}$, we describe the logical operation as $L(\pi, \mathcal{B})$. We construct the logical action using the \textit{dual basis} $D_{\mathcal{B}}$ such that \begin{align}\label{eqn destab} \mathcal{D}_{\mathcal{B}} = \Omega_{2k}\mathcal{B} \Rightarrow \mathcal{D}_{\mathcal{B}} \odot \mathcal{B}^T   = \mathbb{I}_{2k}. \end{align}  We obtain $L(\pi, \mathcal{B})$ via the following proposition:
\begin{prop}\label{eqn logical action}
The logical action of the circuit $\pi$ acting on a basis $\mathcal{B}$ is given by the trace inner product of each row of the dual basis $\mathcal{D}_{\mathcal{B}}$ with the transformed basis $\pi(\mathcal{B})^{T}$,
    $$L(\pi, \mathcal{B}) =  \mathcal{D}_{\mathcal{B}} \odot \pi (\mathcal{B})^{T}.$$
\end{prop}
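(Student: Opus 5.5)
The plan is to read off the logical matrix as the coordinate matrix of the transformed logical basis, expanded in the old logical basis modulo stabilisers, and then to extract those coordinates with the dual basis. First, since $\pi \in \Gamma(\mathcal{C})$, it preserves $\mathcal{C}$. Using \eqref{eqn pi to Fpi^T}, $F_\pi$ is symplectic, so $\pi$ also preserves the trace inner product $\mathcal{U}\odot\mathcal{V}^T$. Because $\pi$ preserves both $\mathcal{C}$ and the inner product, it preserves $\mathcal{C}^{\perp}$. Hence $\pi$ induces a well-defined map on $\mathcal{C}^\perp/\mathcal{C}$, and the rows of $\mathcal{B}$ form a basis of this quotient, viewed as a $2k$-dimensional $GF(2)$-space. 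It follows that there is a unique binary matrix $M\in GF(2)^{2k\times 2k}$ with
\begin{align*}
\pi(\mathcal{B}) = M\mathcal{B} + \mathcal{S},
\end{align*}
where every row of $\mathcal{S}$ lies in $\mathcal{C}$. By definition, $M$ (or its transpose, depending on the row/column convention of \eqref{eqn B arrangement}) is the logical action $L(\pi,\mathcal{B})$: row $i$ records which logical Paulis make up the image of the $i$-th logical operator.

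The second step extracts $M$. I would apply $\mathcal{D}_{\mathcal{B}}\odot(\cdot)^T$ to both sides, using bilinearity of $\odot$ over $GF(2)$. This gives
\begin{align*}
\mathcal{D}_{\mathcal{B}}\odot\pi(\mathcal{B})^T = (\mathcal{D}_{\mathcal{B}}\odot\mathcal{B}^T)M^T + \mathcal{D}_{\mathcal{B}}\odot\mathcal{S}^T.
\end{align*}
The rows of $\mathcal{D}_{\mathcal{B}}=\Omega_{2k}\mathcal{B}$ are rows of $\mathcal{B}$, so they lie in $\mathcal{C}^\perp$. The rows of $\mathcal{S}$ lie in $\mathcal{C}$, so the last term vanishes. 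By \eqref{eqn destab} the first factor is $\mathbb{I}_{2k}$. Hence $\mathcal{D}_{\mathcal{B}}\odot\pi(\mathcal{B})^T = M^T$, which is the logical matrix in the column convention, where column $j$ gives the expansion of $\pi(\mathcal{B}_j)$.

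Finally, I would confirm that the result is independent of coset representatives and that it is symplectic, i.e.\ lies in $Sp(2k,2)$, as a sanity check on the convention. Symplecticity follows from $\pi(\mathcal{B})\odot\pi(\mathcal{B})^T=\Omega_{2k}$, since $\pi$ preserves the inner product. The main obstacle is the first step, specifically justifying that $\pi$ preserves the trace form. This needs the $S_3$ local actions to be semilinear maps that preserve the trace (the $GF(4)$ field automorphism/conjugation and multiplication by $\omega$ both do), or equivalently it follows from $F_\pi\in Sp(2n,2)$ via the homomorphism $\alpha$ of Appendix~\ref{appx symplectic construction}. I would simply invoke $\alpha(\pi)\in Sp(2n,2)$ together with $\mathcal{U}\odot\mathcal{V}^T=U\Omega_{2n}V^T$.
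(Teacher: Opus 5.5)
Your proof is correct and follows essentially the same route as the paper's. In both arguments, $\pi$ preserves $\mathcal{C}^{\perp}$ because it preserves $\mathcal{C}$ and the trace form, and pairing with the dual basis $\mathcal{D}_{\mathcal{B}}=\Omega_{2k}\mathcal{B}$ detects whether $\pi(\mathcal{B}_j)$ anticommutes with the symplectic partner of $\mathcal{B}_i$, i.e.\ reads off the coefficient of $\mathcal{B}_i$ in $\pi(\mathcal{B}_j)$.

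Your explicit decomposition $\pi(\mathcal{B})=M\mathcal{B}+\mathcal{S}$ makes this rigorous, since the stabiliser part visibly drops out. It also fixes the convention correctly as $M^{T}$: column $j$ encodes $\pi(\mathcal{B}_j)$. This is what the paper's worked examples, such as $L(\pi^{46},\mathcal{B})$, actually exhibit, even though the paper's proof text says that row $i$ gives the action of $\pi(\mathcal{B}_i)$.
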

A detailed derivation of Proposition \ref{eqn logical action} can be found in Appendix \ref{appx L(pi,B)}.

In each row of $\mathcal{B}$, we identify $0 \leftrightarrow I$, $1 \leftrightarrow X$, $\omega \leftrightarrow Z$ and $\omega^{2} \leftrightarrow Y$. Consequently, the action $\pi(\mathcal{B})$, refers to the application of $(\rho_{1}, \dots, \rho_{n})$ individually to the row elements corresponding to the $n$ qubits, followed by the application of the permutation $\sigma$ to the columns of $\mathcal{B}$. The mapping between the elements of $GF(4)$ and the phase-free Pauli group is an isomorphism, hence we use their representations in $\mathcal{B}$ interchangeably depending on which makes the most sense in a given situation. 

For example, we take $\mathcal{B}$ to be: $$\mathcal{B} = \begin{bmatrix}
1 & 1 & 0 & 0 \\
\omega^{2} & \omega & 0 & 0 \\
\omega^{2} & 0 & \omega & 0 \\
1 & 0 & 1 & 0
\end{bmatrix} 
$$
Let the circuit $\pi$ be given as:  $$\pi \eqsim ((2,1,3,4); I, I,  HSH, HSH).$$
We begin by applying $HSH$ to column 3 and 4 of $\mathcal{B}$, $$\rho(\mathcal{B}) = \begin{bmatrix}
1 & 1 & 0 & 0 \\
\omega^{2} & \omega & 0 & 0 \\
\omega^{2} & 0 & \omega^{2} & 0 \\
1 & 0 & 1 & 0
\end{bmatrix}$$

We then apply $\sigma= (2,1,3,4)$ to the columns of the resulting $\rho(\mathcal{B})$, 
\small 
$$\sigma(\rho(\mathcal{B})) = \pi(\mathcal{B}) = \begin{bmatrix}
1 & 1 & 0 & 0 \\
\omega & \omega^{2} & 0 & 0 \\
0 & \omega^{2} & \omega^{2} & 0 \\
0 & 1 & 1 & 0 
\end{bmatrix}.$$

We obtain $\mathcal{D}_{\mathcal{B}}$ from $\mathcal{B}$ using Eq. \ref{eqn destab}

$$\mathcal{D}_{\mathcal{B}} = \begin{bmatrix}
    \omega^{2} & 0 & \omega & 0 \\
    1 & 0 & 1 & 0 \\
    1 & 1 & 0 & 0 \\
    \omega^{2} & \omega & 0 & 0 
\end{bmatrix} $$
Using Proposition \ref{eqn logical action}, we obtain the logical operation representing the action of $\pi$ on the basis $\mathcal{B}$, $$L(\pi, \mathcal{B}) = \mathcal{D}_{\mathcal{B}} \odot \pi(\mathcal{B})^{T} = \begin{bmatrix}
    1 & 1 & 1 & 1 \\
    0 & 1 & 1 & 0 \\
    0 & 0 & 1 & 0 \\
    0 & 0 & 1 & 1
\end{bmatrix}.$$ This result, $L(\pi, \mathcal{B})$, corroborates the relevant result given in Section \ref{section results}.

\normalsize

 Symplectic representations encode Clifford operations up to global phases and local Pauli operators. However, the symplectic form factors out phase information and consequently, the physical realisation of $F_{\pi}$ for the associated logical operation $L(\pi, \mathcal{B})$ may differ from the intended logical Clifford by a stabiliser-preserving Pauli phase correction. In practise, this requires applying a suitable Pauli correction to ensure the overall transformation preserves the correct signs of the stabiliser generators. A systematic procedure for determining these Pauli corrections has been developed by Sayginel et al.~\cite{sayginel}, who evaluate the phases acquired by each stabiliser under the proposed Clifford operation and apply the corresponding Pauli adjustments to recover an exact stabiliser mapping. Their framework is fully compatible with the symplectic-Clifford formalism employed in our framework.

\subsection{Conjugacy classes of the symplectic group $Sp(2k,2)$}\label{subsec the symplectic group}
The set of logical operations $L(\pi, \mathcal{B})$ associated with an $[[n,k,d]]$ code form a subgroup of the symplectic group $ Sp(2k,2)$. Elements of $Sp(2k,2)$ are organized into \textit{conjugacy classes}, in which group elements $a$ and $b$ are \textit{conjugate}, $a \sim b$, if they differ by an inner automorphism $b = gag^{-1}$, for $g \in Sp(2k,2)$. The symplectic group grows rapidly with increasing dimensionality; for example, for $k=1$, $Sp(2,2) \cong S_{3}$ has order $6$, while for $k=2$, $Sp(4,2) \cong S_{6}$ has order $720$. As the explicit examples given in this paper is restricted to small stabiliser codes with $k \leq 2$, the corresponding logical groups are exclusively subgroups of the finite groups $Sp(2,2)$ or $Sp(4,2)$ and are therefore of manageable size \footnote{We suspect that more efficient representations of the automorphism group $\Gamma(C)$ and the \textit{transversal group} $\mathcal{T}$ (introduced in section \ref{subsec code equiv}), perhaps combined with further exploitation of code structures, could enable for the computation of larger codes.}. 

Conjugacy in $Sp(2k,2)$ can be shown to be an equivalence relation and as such, the conjugacy classes partition the group. Selecting one element, $A_{j}$, from each class of $Sp(2k,2)$ yields a complete set of class representatives \cite{Dehaene, Taylor}. The automorphism group $\Gamma(\mathcal{C})$ can be grouped based on the conjugacy classes of the logical operations they implement. This grouping has direct operational significance: two logical operations that are conjugate in $Sp(2k,2)$ are equivalent up to a change of logical basis. Circuits implementing conjugate elements therefore realise the same encoded logical operation, differing only by a relabelling of the logical basis. It is worth noting that if a logical operation $L$ is in conjugacy class $[L]$, then its transpose, $L^{T}$, is also a member of $[L]$. Grouping automorphisms by conjugacy class avoids redundant optimisation over circuits that are operationally identical, allowing one representative circuit per class to capture all basis-equivalent implementations.

\newpage 

\onecolumn 

\scriptsize

\begin{tabular}{|p{1.5in}|p{1.2in}|p{1.2in}|p{1.2in}|} \hline 
\multicolumn{4}{|c|}{Conjugacy classes of $Sp(2,2)$} \\
\hline 
\hline
Class Label & Class representative & No. of Elements in Class & Elements in Class\\
\hline
\hline
Conjugacy Class $1$ & $\begin{bmatrix}
    1 & 0 \\
    0 & 1
\end{bmatrix}$ & $1$ & $\{I\}$\\
\hline 
Conjugacy Class $2$ & $\begin{bmatrix}
    0 & 1 \\
    1 & 1
\end{bmatrix}$ & $2$ & \{$SH,HS$\} \\
\hline
Conjugacy Class $3$ & $\begin{bmatrix}
    0 & 1 \\
    1 & 0 
\end{bmatrix}$ & $3$ & $\{H,S,HSH\}$\\
\hline
\end{tabular}\label{table of class reps of SP2,2}

\vspace{1cm}

\begin{tabular}{|p{1.5in}|p{1.5in}|p{2in}|} \hline 
\multicolumn{3}{|c|}{Conjugacy Classes of $Sp(4,2)$} \\
\hline 
\hline
Class Label & Class representative & No. of Elements in Class \\
\hline
\hline
Conjugacy Class $1$ & $\begin{bmatrix}
  1 & 0 & 0 & 0 \\
  0 & 1 & 0 & 0 \\
  0 & 0 & 1 & 0 \\
  0 & 0 & 0 & 1
\end{bmatrix}$ & $1$\\
\hline
Conjugacy Class $2$  & $\begin{bmatrix}
   0 & 0 & 0 & 1 \\
   0 & 0 & 1 & 0 \\
   0 & 1 & 0 & 0 \\
   1 & 0 & 0 & 0 
\end{bmatrix}$ & $15$ \text{ }(e.g. CNOT and SWAP) \\
\hline
Conjugacy Class $3$ & $\begin{bmatrix}
   0 & 0 & 0 & 1 \\
   0 & 0 & 1 & 0 \\
   0 & 1 & 0 & 0 \\
   1 & 0 & 0 & 1
\end{bmatrix}$ & $15$\\
\hline
Conjugacy Class $4$ & $\begin{bmatrix}
   0 & 0 & 0 & 1 \\
   0 & 0 & 1 & 1 \\
   0 & 1 & 1 & 0 \\
   1 & 0 & 0 & 0 
\end{bmatrix}$ & $40$ \\
\hline
Conjugacy Class $5$ & $\begin{bmatrix}
   0 & 0 & 1 & 0 \\
   0 & 0 & 1 & 1 \\
   1 & 1 & 1 & 1 \\
   0 & 1 & 1 & 0 
\end{bmatrix}$ & $40$ \\
\hline
Conjugacy Class $6$  & $\begin{bmatrix}
   0 & 0 & 0 & 1 \\
   0 & 0 & 1 & 1 \\
   1 & 1 & 0 & 0 \\
   1 & 0 & 0 & 0
\end{bmatrix}$ & $45$ \text{ }(e.g. $H_{1} \otimes H_{2}$)\\
\hline
Conjugacy Class $7$ & $\begin{bmatrix}
   0 & 0 & 0 & 1 \\
   0 & 0 & 1 & 0 \\
   0 & 1 & 0 & 0 \\
   1 & 0 & 1 & 0
\end{bmatrix}$ & $90$ \\
\hline
Conjugacy Class $8$ & $\begin{bmatrix}
   0 & 0 & 0 & 1 \\
   0 & 0 & 1 & 1 \\
   1 & 1 & 0 & 1 \\
   1 & 0 & 0 & 0 
\end{bmatrix}$ & $90$ \\
\hline
Conjugacy Class $9$ & $\begin{bmatrix}
   0 & 0 & 0 & 1 \\
   0 & 0 & 1 & 0 \\
   0 & 1 & 0 & 1 \\
   1 & 0 & 1 & 0
\end{bmatrix}$ & $120$ \\
\hline
Conjugacy Class $10$ & $\begin{bmatrix}
   0 & 0 & 0 & 1 \\
   0 & 0 & 1 & 1 \\
   1 & 1 & 0 & 0 \\
   1 & 0 & 1 & 1
\end{bmatrix}$ & $120$ \\
\hline
Conjugacy Class $11$ & $\begin{bmatrix}
   0 & 0 & 0 & 1 \\
   0 & 0 & 1 & 1 \\
   1 & 1 & 0 & 1 \\
   1 & 0 & 1 & 1
\end{bmatrix}$ & $144$ \\
\hline
\end{tabular}\label{table of class reps of SP4,2}

\newpage 

\normalsize

\twocolumn

\subsection{Choice of logical basis}
Exploiting alternative choices of logical basis $\mathcal{B}$ modifies the logical effect of $\Gamma(\mathcal{C})$ on the codespace. $\mathcal{B}$ is a valid logical basis for a code $\mathcal{C}$ with generator $\mathcal{G}$ if and only if it satisfies the following equation: \begin{align}\label{eqn logical basis criteria}
    \begin{bmatrix}
        \mathcal{G} \\ \hline 
        \mathcal{B}
    \end{bmatrix} \odot \begin{bmatrix}
        \mathcal{G} \\ \hline
        \mathcal{B}
    \end{bmatrix}^{T} = \begin{bmatrix}
        0_{(n -k) \times (n-k)} & 0_{(n-k) \times 2k} \\ 0_{2k \times (n-k)} & \Omega_{2k}
    \end{bmatrix}.
\end{align}  A quick calculation shows that $\mathcal{B} \odot \mathcal{B}^{T} = \Omega_{2k} = (A \mathcal{B}) \odot (A \mathcal{B})^{T}$ exactly when $A(\mathcal{B} \odot \mathcal{B}^{T})A^{T} = A \Omega_{2k} A^{T} = \Omega_{2k}$ i.e. when $A$ is a binary symplectic matrix, $A \in Sp(2k,2)$.

A change of logical basis corresponds to a relabelling of the encoded logical operators, while the underlying stabiliser structure and fault-tolerant properties of the code remain invariant. As such, the entire set of logical operations for a given code associated with $\mathcal{C}$ can be constructed by $L(\pi, A\mathcal{B}) = \mathcal{D}_{A\mathcal{B}} \odot \pi(A\mathcal{B})^{T}$, as per Proposition \ref{eqn logical action}.

 \begin{prop}\label{eqn L=ALAinv} Given a logical operation  $L=L(\pi, \mathcal{B})$ (as per Proposition \ref{eqn logical action}) and a symplectic matrix $A \in Sp(2k,2)$, then the resulting Clifford $L(\pi, A\mathcal{B})$ can always be obtained via conjugation of $L$ by $A^{T}$: $$L(\pi, A\mathcal{B}) = (A^{-1})^{T}(L(\pi, \mathcal{B}))A^{T}.$$ \end{prop}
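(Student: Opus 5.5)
The plan is to compute $L(\pi, A\mathcal{B})$ directly from Proposition \ref{eqn logical action} and pull the binary matrix $A$ out on both sides of the $\odot$ product. Concretely, I would write
$$L(\pi, A\mathcal{B}) = \mathcal{D}_{A\mathcal{B}} \odot \pi(A\mathcal{B})^{T}, \qquad \mathcal{D}_{A\mathcal{B}} = \Omega_{2k} A \mathcal{B},$$
using Eq.~\ref{eqn destab} for the dual basis. The argument then reduces to two structural facts plus one symplectic identity.

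\emph{Step 1: $\pi$ commutes with binary row operations.} Each row of $A\mathcal{B}$ is a GF(2)-sum of rows of $\mathcal{B}$. I will argue that $\pi(A\mathcal{B}) = A\,\pi(\mathcal{B})$. This holds because $\pi \eqsim (\sigma;\rho_1,\dots,\rho_n)$ acts on each row entrywise by the $\rho_i$ and then permutes columns by $\sigma$. Each $\rho_i$ permutes $\{1,\omega,\omega^2\}$ and fixes $0$, and it is additive on $GF(4)$, being the image of a symplectic map on $GF(2)^2$. Equivalently, Eq.~\ref{eqn pi to Fpi^T} shows that $\pi$ corresponds to right multiplication by $F_\pi^T$ on $\phi$-images, so $\phi(\pi(A\mathcal{B})) = A B F_\pi^T = A\,\phi(\pi(\mathcal{B}))$. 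I expect this to be the main point needing care: additivity of the local Clifford action on $GF(4)$ is exactly what makes it work. I would verify it via the symplectic (linear) picture rather than entry by entry.

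\emph{Step 2: bilinearity of $\odot$ under binary matrices.} For binary $M,N$ and $GF(4)$ matrices $\mathcal{U},\mathcal{V}$, I will show
$$(M\mathcal{U})\odot(N\mathcal{V})^{T} = M\,(\mathcal{U}\odot\mathcal{V}^{T})\,N^{T}.$$
This follows because $\mathcal{U}\odot\mathcal{V}^T = U\Omega_{2n}V^T$ in binary form, as in Eq.~\ref{eqn symplectic basis is symplectic}. Applying it gives
$$L(\pi,A\mathcal{B}) = \Omega_{2k}A\,\big(\mathcal{B}\odot\pi(\mathcal{B})^{T}\big)A^{T}.$$
Since $\Omega_{2k}^2=\mathbb{I}$, we have $\mathcal{B} = \Omega_{2k}\mathcal{D}_{\mathcal{B}}$, and therefore
$$L(\pi,A\mathcal{B}) = \Omega_{2k}A\Omega_{2k}\,L(\pi,\mathcal{B})\,A^{T}.$$

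\emph{Step 3: the symplectic identity.} Finally, I use $A\in Sp(2k,2)$, i.e. $A\Omega_{2k}A^T=\Omega_{2k}$. Multiplying on the left by $\Omega_{2k}$ gives $(\Omega_{2k}A\Omega_{2k})A^T = \mathbb{I}$. Hence $\Omega_{2k}A\Omega_{2k} = (A^T)^{-1} = (A^{-1})^T$, which yields the claimed formula $L(\pi,A\mathcal{B}) = (A^{-1})^T L(\pi,\mathcal{B}) A^T$.
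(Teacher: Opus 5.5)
Your proposal is correct and follows essentially the same route as the paper's proof: both rewrite $\mathcal{D}_{A\mathcal{B}}\odot\pi(A\mathcal{B})^{T}$ in binary form via $\mathcal{U}\odot\mathcal{V}^{T}=U\Omega_{2n}V^{T}$, use $\phi(\pi(A\mathcal{B}))=ABF_{\pi}^{T}$ to pull $A$ out on both sides, insert $\Omega_{2k}^{2}=\mathbb{I}$ to recover $\mathcal{D}_{\mathcal{B}}$, and then identify $\Omega_{2k}A\Omega_{2k}=(A^{-1})^{T}$ from the symplectic condition. Your explicit check that $\pi$ commutes with binary row operations, via additivity of the local $S_3$ action on $GF(4)$, is a step the paper only uses implicitly, so your write-up is if anything slightly more complete.
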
 
This result follows directly from the algebraic discussion in subsection \ref{subsec the symplectic group} and its proof can be found in Appendix \ref{appx ALA}. Proposition \ref{eqn L=ALAinv} confirms that basis changes act by conjugation within $Sp(2k,2)$, meaning that all accessible logical operators lie within the same conjugacy class of the subset realised by $\Gamma(C) \leq Sp(2k,2)$. 

All of the foregoing discussion can be summarised in the following theorem: 
\begin{theorem}\label{thm 1}
    Given a $GF(4)$ code $\mathcal{C}$ (corresponding to a stabiliser code $Q(S)$), a code automorphism $\pi$ (corresponding to a physical Clifford circuit that preserves $\mathcal{C}$) and logical basis $\mathcal{B} = [\mathcal{X}_{1}, \dots, \mathcal{Z}_{k}]^{T}$, then the logical Clifford implemented by $\pi$ can be obtained from Proposition \ref{eqn logical action}: $$
    L(\pi, \mathcal{B}) = \mathcal{D}_{\mathcal{B}}  \odot \pi (\mathcal{B})^{T} .
    $$ Furthermore, any other valid choice of logical basis is of the form $A\mathcal{B}$ with $A \in Sp(2k,2)$ and the resulting logical Clifford $L(\pi, A\mathcal{B})$ satisfies Proposition \ref{eqn L=ALAinv}:
    $$L(\pi,A\mathcal{B}) = (A^{-1})^{T}(L(\pi, \mathcal{B}))A^{T}.$$ The notion of conjugacy class therefore fully captures the logical basis degree of freedom.
\end{theorem}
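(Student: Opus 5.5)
The theorem bundles three claims, and I will treat them in turn.

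The first claim is the formula $L(\pi,\mathcal{B})=\mathcal{D}_{\mathcal{B}}\odot\pi(\mathcal{B})^T$. This is exactly Proposition \ref{eqn logical action}, which I invoke directly. For completeness I would recall why it holds. Since $\pi\in\Gamma(\mathcal{C})$, $\pi$ preserves $\mathcal{C}$. It also preserves the trace form, because local Cliffords and permutations are symplectic. Hence $\pi$ preserves $\mathcal{C}^\perp$, so each row $\pi(\mathcal{B}_i)$ is congruent modulo $\mathcal{C}$ to $\sum_j L_{ij}\mathcal{B}_j$. Taking the $\odot$-product with $\mathcal{D}_{\mathcal{B}}$ extracts these coefficients, since $\mathcal{D}_{\mathcal{B}}\odot\mathcal{B}^T=\mathbb{I}_{2k}$ and $\mathcal{D}_{\mathcal{B}}\odot\mathcal{C}^T=0$. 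The result is the coefficient matrix, up to the transpose convention fixed in the appendix.

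The second claim is that every valid basis has the form $A\mathcal{B}$ with $A\in Sp(2k,2)$. Let $\mathcal{B}'$ be any other valid basis, i.e.\ one satisfying Eq.~\ref{eqn logical basis criteria}. Its rows lie in $\mathcal{C}^\perp$, and the quotient $\mathcal{C}^\perp/\mathcal{C}$ has binary dimension $2k$ with the rows of $\mathcal{B}$ as a basis. So there is a binary $2k\times 2k$ matrix $A$ with $\mathcal{B}'\equiv A\mathcal{B}$ modulo $\mathcal{C}$.

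Two consequences follow from $\mathcal{C}$ lying in the radical of the form restricted to $\mathcal{C}^\perp$. First, $\Omega_{2k}=\mathcal{B}'\odot\mathcal{B}'^T=A(\mathcal{B}\odot\mathcal{B}^T)A^T=A\Omega_{2k}A^T$, so $A$ is symplectic. Second, the logical action $L(\pi,\cdot)$ depends only on cosets modulo $\mathcal{C}$: $\pi(\mathcal{C})=\mathcal{C}$, and $\mathcal{D}_{\mathcal{B}'}=\Omega_{2k}\mathcal{B}'$ pairs trivially with $\mathcal{C}$. Therefore we may identify $\mathcal{B}'$ with $A\mathcal{B}$ without loss. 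The converse, that each $A\in Sp(2k,2)$ gives a valid basis, is the ``quick calculation'' already in the text.

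The third claim is the conjugation formula, which is Proposition \ref{eqn L=ALAinv}. I would verify it directly. Using $\mathcal{D}_{A\mathcal{B}}=\Omega_{2k}A\mathcal{B}$ and $\Omega_{2k}A=(A^{-1})^T\Omega_{2k}$ (which comes from $A\Omega A^T=\Omega$ and $\Omega^{-1}=\Omega$), we get $\mathcal{D}_{A\mathcal{B}}=(A^{-1})^T\mathcal{D}_{\mathcal{B}}$.

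Next, $\pi$ acts column-wise and additively, so it commutes with binary row operations: $\pi(A\mathcal{B})=A\pi(\mathcal{B})$. Bilinearity of $\odot$ over $GF(2)$ then gives
$$L(\pi,A\mathcal{B})=(A^{-1})^T\bigl(\mathcal{D}_{\mathcal{B}}\odot\pi(\mathcal{B})^T\bigr)A^T.$$

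Finally, $(A^{-1})^T=(A^T)^{-1}$, and $A^T\in Sp(2k,2)$ because the symplectic group is closed under transpose. So the formula above is conjugation within $Sp(2k,2)$. As $A$ ranges over $Sp(2k,2)$, so does $A^T$, so the set of accessible logical operations is exactly the conjugacy class $[L(\pi,\mathcal{B})]$. This is the final statement of the theorem.

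The step needing most care is the additivity and commutation $\pi(A\mathcal{B})=A\pi(\mathcal{B})$. The local maps $\rho_i$ (for example $HSH$ acting on $GF(4)$) are additive but not $GF(4)$-linear, so this is only valid because $A$ is binary. The other delicate point is the well-definedness modulo $\mathcal{C}$. I would check both explicitly against the convention of Eq.~\ref{eqn pi to Fpi^T}.
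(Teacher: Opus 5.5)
Your proposal is correct. Its core is the same mechanism as the paper's appendix proof of Proposition~\ref{eqn L=ALAinv}: you use $\mathcal{D}_{A\mathcal{B}}=\Omega_{2k}A\mathcal{B}=(A^{-1})^T\mathcal{D}_{\mathcal{B}}$, together with $\pi(A\mathcal{B})=A\,\pi(\mathcal{B})$ and bilinearity of $\odot$. The paper likewise pulls $(A^{-1})^T=\Omega_{2k}A\Omega_{2k}$ out of the dual basis and $A^T$ out of the transformed basis.

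There are two differences worth noting.

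\emph{Working over $GF(4)$.} You use only that each $\rho_i$ is additive, so binary row operations commute with $\pi$. The paper instead passes to the binary picture via $\mathcal{W}\odot\mathcal{V}^T=W\Omega_{2n}V^T$ and $\pi(\mathcal{B})\leftrightarrow BF_\pi^T$. That route relies on the homomorphism $\alpha$ and the explicit construction of $F_\pi$, so yours needs less machinery.

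\emph{The clause that every valid basis is of the form $A\mathcal{B}$.} This is the more substantive difference. The paper's proof of the theorem just cites Propositions~\ref{eqn logical action} and~\ref{eqn L=ALAinv} and never justifies this clause; the text only checks that $A\mathcal{B}$ is valid exactly when $A\in Sp(2k,2)$. As you note, the clause is literally true only modulo $\mathcal{C}$. Your argument supplies what is missing: the rows of $\mathcal{B}$ form a basis of $\mathcal{C}^\perp/\mathcal{C}$, $\mathcal{C}$ lies in the radical so $A$ is forced to be symplectic, and $L(\pi,\cdot)$ is constant on cosets. This is exactly what turns the closing sentence about conjugacy classes from an assertion into a theorem.

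Two details to tighten. First, the rows of $\mathcal{B}$ are independent modulo $\mathcal{C}$ because their Gram matrix $\Omega_{2k}$ is invertible while $\mathcal{C}$ pairs trivially with $\mathcal{C}^\perp$; say this explicitly. Second, in the coset-invariance check write $\mathcal{B}'=A\mathcal{B}+\mathcal{E}$ with the rows of $\mathcal{E}$ in $\mathcal{C}$. The cross term $(\Omega_{2k}\mathcal{E})\odot\pi(A\mathcal{B})^T$ vanishes because $\pi(\mathcal{C}^\perp)=\mathcal{C}^\perp$, not merely because $\pi(\mathcal{C})=\mathcal{C}$.

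Your transpose caveat is well placed. If $\pi(\mathcal{B})\equiv M\mathcal{B}$ modulo $\mathcal{C}$, then $L=M^T$, so column $j$ of $L$ records $\pi(\mathcal{B}_j)$, not row $j$ as the appendix says. The worked $[[4,2,2]]$ example confirms this. Moreover, $M\mapsto AMA^{-1}$ transposes exactly to $(A^{-1})^TLA^T$, so your coefficient picture and the paper's formula agree.
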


 \begin{proof}
     The full derivations of Proposition \ref{eqn logical action} and \ref{eqn L=ALAinv} are detailed in Appendix \ref{appx L(pi,B)} and \ref{appx ALA} respectively. 
    From proposition \ref{eqn logical action}, we know that the logical action of $\pi \in \Gamma(C)$ acting on a valid basis $\mathcal{B}$ satisfies $$L(\pi, \mathcal{B}) =  \mathcal{D}_{\mathcal{B}} \odot \pi(\mathcal{B})^{T} .$$
     Furthermore, if $L=L(\pi, \mathcal{B})$ is a member of conjugacy class $[L]$ then, by Proposition \ref{eqn L=ALAinv}, we can implement \textit{any} logical operation from that class via a basis change by $A \in Sp(2k,2)$. 
 \end{proof}

We group automorphisms $\pi \in \Gamma(C)$, based on the conjugacy class of the logical action they implement. We can describe the set of automorphisms that implement some logical operation $L(\pi^{i}, \mathcal{B})$ as $[L(\pi^{i}, \mathcal{B})]$. Via basis change by a suitable $A \in Sp(4,2)$, we can implement \textit{any} element of this conjugacy class, with any $\pi^{j}$ that implements a logical operation from $[L(\pi^{i}, \mathcal{B})]$. We give an explicit example of this process in section \ref{section example with fig}.

Recall, for a given circuit, $\pi \in \Gamma(\mathcal{C})$, the number of SWAP gates and non-identity Clifford operations are denoted by the figures-of-merit $|\pi_{\sigma}|$ and $|\pi_{e}|$, respectively. The SWAP count $|\pi_{\sigma}|$ of a circuit $\pi$, refers to the number of SWAP gates, which can be calculated by \begin{align}\label{eqn order of sigma} |\pi_{\sigma}|=n-c, \end{align} where $n$ refers to the number of qubits that $\sigma$ is acting on and $c$ refers to the number of disjoint cycles in $\sigma$'s cycle decomposition. 

We use the figures-of-merit $|\pi_{\sigma}|$ and $|\pi_{e}|$ to optimise independently over the aforementioned metrics,
\begin{enumerate}[i)]
    \item \textbf{Metric $1$:} $7|\pi_{\sigma}| + |\pi_{e}|,$
    \item \textbf{Metric $2$:} $|\pi_{e}|,$
    \end{enumerate}
each of which has differing practical applications as discussed in Section \ref{sectino intro overall}. 

The physical circuits which implement logical actions from a given conjugacy class can vary substantially in resources, such as SWAP count $|\pi_{\sigma}|$ or the number of non-identity Clifford gates $|\pi_{e}|$. If we select a representative within each equivalence class with an optimal cost metric, then we can implement \textit{any} logical action in this equivalence class using these optimised resource costs, by means of a simple basis change.

\subsection{Code equivalence}\label{subsec code equiv}
In addition to the choice of logical basis, code equivalence provides a further degree of freedom for code construction and circuit optimisation. We describe two additive $GF(4)$ codes, $\mathcal{C}$ and $\mathcal{C}'$, as being \textit{equivalent} if there exists a field automorphism and an isometry that maps $\mathcal{C}$ onto $\mathcal{C}'$. The set of all such operations forms the Hamming group $Ham(\mathcal{C})$, also known as the group of monomial automorphisms of additive codes over $GF(4)$. Elements of this group can perform combinations of:
\begin{enumerate}[(i)]
    \item permuting co-ordinates,
    \item multiplying co-ordinates by non-zero field elements,
    \item applying the Frobenius automorphism $x \mapsto x^{2}$ independently to any choice of co-ordinates. 
\end{enumerate} 
Under the mapping $\phi$, the action of (i) corresponds to qubit permutations, whereas the combined action of (ii) and (iii) corresponds to local Clifford operations. As such, elements of the Hamming group, $Ham(\mathcal{C})$, implement SWAP-transversal gates on the codespace associated with $\mathcal{C}$. Structurally, $Ham(\mathcal{C})$ can be realised as $S_{3} \wr_{n} S_{n}$, where $\wr_{n}$ denotes the wreath product on $n$ qubits; that is, the semi-direct product $S_{3}^{n} \rtimes S_{n}$ representing independent local permutations on each co-ordinate, combined with global co-ordinate permutations. This group has order $6^{n}n!$ \cite{calderbank1997quantum, Danielsen}. Equivalently, $Ham(\mathcal{C}) \cong \mathbb{F}_{4}^{*n} \rtimes (\Gamma(\mathbb{F}_{4}) \times S_{n})$.

The automorphism group is necessarily a subgroup of the Hamming group, $\Gamma(\mathcal{C}) \leq Ham(\mathcal{C}) \cong S_{3} \wr_{n} S_{n}$ \cite{calderbank1997quantum, Gottesman1997}; in fact, these automorphisms are exactly the elements of the Hamming group that leave the code unchanged. We can express elements of $\pi \in \Gamma(C)$ as permutations in $S_{3n}$ however, recall that in certain situations, which will be obvious from context, we will also refer to the condensed symplectic form of $\pi$ by the same symbol, but we use ``$\eqsim$'' to flag the abuse of notation in these circumstances.

We can construct a transversal subgroup $\mathcal{T}$ by taking a single representative $\tau$ from each coset of the quotient $Ham(\mathcal{C}) / \Gamma(\mathcal{C})$. The process for constructing $\mathcal{T}$ via computer algebra systems is detailed in appendix \ref{appx: transversal GAP}. If $\begin{bmatrix}
    \mathcal{G} \\ \hline
    \mathcal{B}
\end{bmatrix}$ satisfies the conditions of Eq. \eqref{eqn logical basis criteria} and is associated with a code $\mathcal{C}$, then $\begin{bmatrix}
    \tau(\mathcal{G}) \\ \hline
    \tau(\mathcal{B})
\end{bmatrix} = \begin{bmatrix}
    \mathcal{G}' \\ \hline
    \mathcal{B}'
\end{bmatrix}$ also satisfies the relevant conditions and is associated with a distinct but equivalent code $C'$. Given a code $\mathcal{C}$, the number of distinct equivalent codes is given by \begin{align}\label{eqn order tau}
    |\mathcal{T}|=\frac{6^{n}n!}{|\Gamma(\mathcal{C})|}.
\end{align}
This is essentially the orbit-stabiliser theorem applied to codes. We see an explicit example of Eq.~\eqref{eqn order tau} for the $[[4,2,2]]$ code in Figure~\ref{fig 3 wheels}. Figure~\ref{fig 2 wheels} depicts the size of the automorphism group $|\Gamma(C)|=144$. Fig. \ref{fig 3 wheels} depicts the $216$ distinct equivalent versions of the $[[4,2,2]]$ code. As per Eq.~\eqref{eqn order tau}, $|Ham(\mathcal{C})|=|\mathcal{T}| \times |\Gamma(\mathcal{C})|$, which in the case of the $[[4,2,2]]$ code is $216 \times 144 = 31,104= 6^{4}4! $. As such, taken together, the $144$ automorphisms for each of the $216$ distinct equivalent versions of the $[[4,2,2]]$ code exhaust the Hamming group.

Furthermore, automorphisms of equivalent codes are conjugate and as such the relation \begin{align}\label{eqn pi' = tauinv pi tau}  \pi = \tau \pi' \tau^{-1} \end{align} holds for $\pi \in \Gamma(\mathcal{C})$, the automorphism of the original code, and $\pi'= \tau^{-1}\pi \tau \in \Gamma(\mathcal{C}')$, the corresponding automorphism of the equivalent code $\mathcal{C}'=\tau (\mathcal{C})$. The logical operation describing the effect of the transformed automorphism, $\pi' \in \Gamma(\mathcal{C}')$, on the encoded space of $\mathcal{C}'$ can be obtained via the following equation \begin{align}\label{eqn tau transform}L(\tau, \pi, \mathcal{B}) = \mathcal{D}_{\tau(\mathcal{B})}  \odot ((\tau^{-1}\pi \tau)\tau (\mathcal{B}))^{T}.\end{align} 

\begin{prop}\label{prop code equivalence no effect}
    Code equivalence has no effect on logical operations.
\end{prop}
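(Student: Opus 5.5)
The plan is to show that the logical operation of Eq.~\eqref{eqn tau transform} equals $L(\pi,\mathcal{B})$. Concretely, I would prove
\[
L(\tau,\pi,\mathcal{B}) = \mathcal{D}_{\tau(\mathcal{B})} \odot \bigl((\tau^{-1}\pi\tau)\tau(\mathcal{B})\bigr)^{T} = \mathcal{D}_{\mathcal{B}} \odot \pi(\mathcal{B})^{T} = L(\pi,\mathcal{B}),
\]
where the last equality is Proposition~\ref{eqn logical action}. The argument rests on two facts: how $\tau$ acts on the dual basis, and the invariance of the trace form $\odot$ under elements of $Ham(\mathcal{C})$.

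\textbf{Step 1 (invariance of $\odot$).} Every element $h \in Ham(\mathcal{C}) \cong S_3 \wr_n S_n$ preserves the trace-Hermitian form. That is, $h(\mathcal{U}) \odot h(\mathcal{V})^{T} = \mathcal{U} \odot \mathcal{V}^{T}$ for all $\mathcal{U},\mathcal{V}$ with $n$ columns. It suffices to check this on generators, because $\odot$ is a sum over coordinates of $Tr(u_i\overline{v_i})$.
\begin{enumerate}[(i)]
\item A coordinate permutation only reorders the summands, so the sum is unchanged.
\item Multiplying a coordinate by a nonzero $c$ gives $Tr(cu\,\overline{cv}) = Tr(c\bar c\, u\bar v) = Tr(u\bar v)$, since $c\bar c = c^{3} = 1$.
\item Applying Frobenius to a coordinate gives $Tr(\bar u\, v) = Tr(\overline{u\bar v}) = Tr(u\bar v)$, because the trace is Frobenius-invariant.
\end{enumerate}
Equivalently, via $\phi$ and Eq.~\eqref{eqn pi to Fpi^T}, each such $h$ acts by a symplectic matrix $F_h$. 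This is the same homomorphism $\alpha$, extended to all of $Ham(\mathcal{C})$, which is legitimate since its construction in Appendix~\ref{appx symplectic construction} is local and permutational.

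\textbf{Step 2 (dual basis transforms covariantly).} By definition $\mathcal{D}_{\mathcal{B}} = \Omega_{2k}\mathcal{B}$, where $\Omega_{2k}$ is a binary matrix acting on rows. The action of $\tau$ is coordinatewise, i.e.\ it acts on columns and field entries, and it is additive. Binary row combinations therefore commute with it, so $\mathcal{D}_{\tau(\mathcal{B})} = \Omega_{2k}\tau(\mathcal{B}) = \tau(\Omega_{2k}\mathcal{B}) = \tau(\mathcal{D}_{\mathcal{B}})$. Additivity is the one point to state carefully: permutations, scalings and Frobenius are all $GF(2)$-linear, so $\tau$ commutes with $GF(2)$ row combinations. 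Step 1 also shows that $\tau(\mathcal{B})$ is a valid logical basis for $\mathcal{C}'$, as the paper already asserts.

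\textbf{Step 3 (combine).} Since $(\tau^{-1}\pi\tau)\tau = \tau^{-1}\pi\tau^{2}$ as written, I would first fix the composition convention. Eq.~\eqref{eqn pi' = tauinv pi tau} is intended so that $\pi'$ acting on $\tau(\mathcal{B})$ gives $\tau(\pi(\mathcal{B}))$. Under the right-action convention, in which $\tau\pi'\tau^{-1} = \pi$ means ``apply $\tau^{-1}$, then $\pi'$, then $\tau$'' read appropriately, the expression reduces to $\pi'(\tau\mathcal{B}) = \tau(\pi(\mathcal{B}))$. With that in hand,
\[
\mathcal{D}_{\tau(\mathcal{B})} \odot \pi'(\tau\mathcal{B})^{T} = \tau(\mathcal{D}_{\mathcal{B}}) \odot \tau(\pi(\mathcal{B}))^{T} = \mathcal{D}_{\mathcal{B}} \odot \pi(\mathcal{B})^{T}
\]
by Step 1, which is $L(\pi,\mathcal{B})$.

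\textbf{Main obstacle.} I expect the main difficulty to be this bookkeeping of composition order in Eq.~\eqref{eqn tau transform}, rather than any substantive algebra. I would resolve it by writing everything through $\alpha$ as right multiplication by $F^{T}$, following Eq.~\eqref{eqn pi to Fpi^T}. In that form $\pi'(\tau\mathcal{B})$ corresponds to $B F_\tau^{T} F_{\pi'}^{T} = B F_\pi^{T} F_\tau^{T}$, and the claim becomes
\[
\Omega_{2k} B F_\tau^{T}\,\Omega_{2n}\,(B F_\pi^{T} F_\tau^{T})^{T} = \Omega_{2k} B\, \Omega_{2n} F_\pi B^{T},
\]
which follows immediately from $F_\tau^{T}\Omega_{2n}F_\tau = \Omega_{2n}$.

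The conclusion is that, for each $\pi$ and each equivalent code, the set of logical operations (and hence of conjugacy classes) is unchanged. What does change is the physical circuit $\pi'$, and so its cost under Metrics 1 and 2.
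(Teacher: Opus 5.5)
Your proof is correct and is essentially the paper's: the computation in your final paragraph is exactly the appendix derivation. That computation is $\Omega_{2k}BF_\tau^{T}\Omega_{2n}F_\tau F_\pi B^{T}=\Omega_{2k}B\,\Omega_{2n}F_\pi B^{T}$, obtained via $F_\tau^{T}F_{\pi'}^{T}=F_\pi^{T}F_\tau^{T}$ (the paper's $F_{\tau^{-1}\pi\tau}=F_\tau F_\pi F_{\tau^{-1}}$) and $F_\tau^{T}\Omega_{2n}F_\tau=\Omega_{2n}$. Your Steps 1--3 are its $GF(4)$-level restatement: trace-form invariance checked on generators of $Ham(\mathcal{C})$, covariance of $\mathcal{D}_{\mathcal{B}}$, and the intertwining $\pi'\circ\tau=\tau\circ\pi$.

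The one slip is the verbal gloss in Step 3. ``Apply $\tau^{-1}$, then $\pi'$, then $\tau$'' is the left-action reading of $\tau\pi'\tau^{-1}$, which would make $\pi'$ an automorphism of $\tau^{-1}(\mathcal{C})$ rather than $\tau(\mathcal{C})$. The identity you actually use, $\pi'(\tau(\mathcal{B}))=\tau(\pi(\mathcal{B}))$, is nonetheless the correct one.
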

 Proposition \ref{prop code equivalence no effect} states that if we let $\tau \in \mathcal{T}$, $\pi \in \Gamma(C)$ and  $\tau^{-1}\pi\tau \in \Gamma(\tau(C))$, then the logical operations they implement $L(\pi, \mathcal{B})$ and $L(\tau, \pi, \mathcal{B})$ (constructed according to Proposition \ref{eqn logical action} and Eq.~\eqref{eqn tau transform} respectively) are equal: $$L(\pi, \mathcal{B}) = L(\tau, \pi, \mathcal{B}).$$
 This is proven in Appendix \ref{appendix L(id)=L(tau)}.
Thus, code equivalence alters the physical circuit implementation but preserves the logical action \cite{roetteler}. 

\begin{theorem}\label{thm transversal}
    Let $\mathcal{C}$ be a classical $GF(4)$ code (corresponding to a stabiliser code $Q(S)$) with code automorphism $\pi \in \Gamma(\mathcal{C})$ (corresponding to a physical Clifford circuit that preserves $\mathcal{C}$) and logical basis $\mathcal{B} = [\mathcal{X}_{1}, \dots \mathcal{Z}_{k}]^{T}$ that implements a logical Clifford $L$ as in Theorem \ref{thm 1}. Any and every equivalent stabiliser code $\mathcal{C}'$ is of the form $\mathcal{C}'=\tau(\mathcal{C})$ where $\tau$ is an element of the group transversal $\mathcal{T}$ obtained by the quotient $Ham(\mathcal{C})/\Gamma(\mathcal{C})$. The corresponding code automorphism $\pi'$ for $\mathcal{C}'$ is found via $\pi'=\tau^{-1}\pi\tau$ and its logical action ---on the basis $\tau(\mathcal{B})$---is also $L$. 
\end{theorem}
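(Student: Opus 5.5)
Three claims need proof: every equivalent code arises as $\tau(\mathcal{C})$ for some $\tau\in\mathcal{T}$; $\pi'=\tau^{-1}\pi\tau$ is an automorphism of $\mathcal{C}'$; and $\pi'$ acting on the basis $\tau(\mathcal{B})$ implements $L$. The argument is an orbit--stabiliser argument for the action of $Ham(\mathcal{C})\cong S_3\wr_n S_n$ on additive codes, followed by an appeal to Proposition \ref{prop code equivalence no effect}.

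\textbf{Step 1 (parametrising equivalent codes).} By definition, $\mathcal{C}'$ is equivalent to $\mathcal{C}$ exactly when $\mathcal{C}'=h(\mathcal{C})$ for some $h\in Ham(\mathcal{C})$. So the set of equivalent codes is the orbit of $\mathcal{C}$ under $Ham(\mathcal{C})$. The stabiliser of $\mathcal{C}$ in this action is, by the remark that automorphisms are exactly the Hamming elements leaving the code unchanged, $\Gamma(\mathcal{C})$. Orbit--stabiliser then gives a bijection between $Ham(\mathcal{C})/\Gamma(\mathcal{C})$ and the orbit, sending $h\Gamma(\mathcal{C})\mapsto h(\mathcal{C})$. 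This map is well defined because $h\gamma(\mathcal{C})=h(\mathcal{C})$ for $\gamma\in\Gamma(\mathcal{C})$. Since $\mathcal{T}$ contains exactly one representative from each coset, every $\mathcal{C}'$ equals $\tau(\mathcal{C})$ for a unique $\tau\in\mathcal{T}$, which also recovers Eq.~\eqref{eqn order tau}.

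There is one point of care: the convention for composition and for left versus right cosets. It must be chosen so that $h(\mathcal{C})$ depends only on the coset containing $h$, which requires using cosets of the form $h\Gamma(\mathcal{C})$. I expect this to be the only real subtlety, and I would fix the convention at the outset to match Eq.~\eqref{eqn pi' = tauinv pi tau}.

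\textbf{Step 2 ($\pi'$ is an automorphism of $\mathcal{C}'$).} Working in the paper's convention, I would verify that $\tau^{-1}\pi\tau$ maps $\mathcal{C}'=\tau(\mathcal{C})$ onto itself. This uses two facts: $\pi$ preserves $\mathcal{C}$, and $\Gamma(\tau(\mathcal{C}))=\tau\Gamma(\mathcal{C})\tau^{-1}$ up to the order in which permutations are composed. It also follows that conjugation by $\tau$ is a bijection $\Gamma(\mathcal{C})\to\Gamma(\mathcal{C}')$, so every automorphism of $\mathcal{C}'$ arises this way. I would also check that $\tau(\mathcal{B})$ is a valid logical basis for $\mathcal{C}'$. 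This holds because elements of $Ham$ act as local Cliffords plus permutations and therefore preserve the trace inner product $\odot$, so Eq.~\eqref{eqn logical basis criteria} is carried over from $[\mathcal{G};\mathcal{B}]$ to $[\tau(\mathcal{G});\tau(\mathcal{B})]$.

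\textbf{Step 3 (logical action).} The logical action of $\pi'$ on the basis $\tau(\mathcal{B})$ is exactly $L(\tau,\pi,\mathcal{B})$ as defined in Eq.~\eqref{eqn tau transform}. By Proposition \ref{prop code equivalence no effect}, this equals $L(\pi,\mathcal{B})=L$. For a self-contained check, preservation of $\odot$ gives $\mathcal{D}_{\tau(\mathcal{B})}=\Omega_{2k}\tau(\mathcal{B})=\tau(\mathcal{D}_\mathcal{B})$. We also have $\pi'\tau(\mathcal{B})=\tau\pi(\mathcal{B})$. Hence
\[\tau(\mathcal{D}_\mathcal{B})\odot\tau(\pi(\mathcal{B}))^T=\mathcal{D}_\mathcal{B}\odot\pi(\mathcal{B})^T,\]
again by invariance of $\odot$ under $\tau$. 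This completes the proof.
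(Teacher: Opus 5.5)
Your proposal is correct and follows the paper's route. The paper's proof simply cites the relation $\pi'=\tau^{-1}\pi\tau$, Eq.~\eqref{eqn tau transform} and Proposition~\ref{prop code equivalence no effect}; the appendix proof of that proposition is exactly the matrix form $F_\tau^{T}\Omega_{2n}F_\tau=\Omega_{2n}$ of your observation that $\tau$ preserves $\odot$, and your orbit--stabiliser Step~1 makes explicit what the paper only remarks next to Eq.~\eqref{eqn order tau}.

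The one thing to tidy is the convention you flagged. You use $\pi'(\tau(\mathcal{B}))=\tau(\pi(\mathcal{B}))$ with $\pi'=\tau^{-1}\pi\tau$, so the leftmost factor acts first, as in the paper's $\alpha(abc)=F_cF_bF_a$. In that convention the cosets on which $h\mapsto h(\mathcal{C})$ is constant are $\Gamma(\mathcal{C})h$, the right cosets returned by GAP's \texttt{RightTransversal}, rather than $h\Gamma(\mathcal{C})$, and correspondingly $\Gamma(\mathcal{C}')=\tau^{-1}\Gamma(\mathcal{C})\tau$.
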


\begin{proof}
    The proof of theorem \ref{thm transversal} follows from Eq.~\eqref{eqn pi' = tauinv pi tau}, Eq.~\eqref{eqn tau transform} and Proposition \ref{prop code equivalence no effect} (proven in Appendix \ref{appendix L(id)=L(tau)}). 
\end{proof}

This means that exploiting code equivalence does not give us access to new logical operations however, distinct equivalent codes possess distinct automorphism groups, meaning that exploiting code equivalence grants us access to a much larger pool of physical circuits with (potentially) better parameters for practical application.

\section{Worked Example}\label{section example with fig}
 In this section, we use the example of the $[[4,2,2]]$ code to analyse how independent optimisation over the two previously established cost functions (control-Clifford cost and local Clifford cost) improves circuit realisation across distinct symplectic classes. We use the \textit{version} of the $[[4,2,2]]$ code obtained from \href{https://qecdb.org/}{QECDB} \cite{QECDB}, defined by its generator-basis matrix \begin{align}\label{eqn og generator} \begin{bmatrix}
    \mathcal{G} \\
    \hline 
    \mathcal{B}
\end{bmatrix} = \begin{bmatrix}
    1 & 1 & 1 & 1 \\
    \omega & \omega & \omega & \omega \\
    \hline 
    1 & 1 & 0 & 0 \\
    \omega & \omega & 0 & 0 \\
    \omega & 0 & \omega & 0 \\
    1 & 0 & 1 & 0
\end{bmatrix}.
\end{align} The automorphism group of this code, $\Gamma(\mathcal{C})$, contains $144$ distinct physical circuits which, when projected onto $Sp(4,2)$, enact $36$ distinct logical operations, which can be grouped into $6$ non-empty conjugacy classes as shown in Figure~\ref{fig 2 wheels}. Note that $Sp(4,2)$ has $11$ conjugacy classes, as demonstrated by table \ref{table of class reps of SP4,2}, however, in the case of the $[[4,2,2]]$ code, only logical operations from classes $1,2,4,5,6$ and $9$ can be implemented. Furthermore, the labels $1, \dots, 144$ for automorphisms and the index labelling of the conjugacy classes are arbitrary; the only important factor is the grouping of automorphisms according to the logical transformations they induce.

\newpage
\onecolumn

\begin{figure}
  \centering
  \includegraphics[scale=1.53,trim= 119 0 0 0,clip]{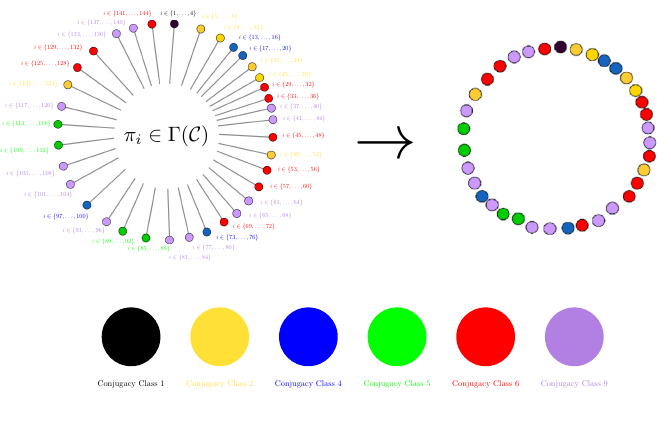}
  \caption{ Figure showing the elements of the automorphism group of $\mathcal{C}$ associated with the $[[4,2,2]]$ code. Each dot represents a distinct logical operation. There are $4$ different automorphisms which implement each logical operation. Elements $\pi^{i} \in \Gamma(\mathcal{C})$ composed in the \textit{same} colour implement logical operations belonging to the \textit{same} conjugacy class. These conjugacy classes are also given in Table \ref{table of class reps of SP4,2}. The first wheel (left) has explicit index labels $\pi^{i} \in \Gamma(\mathcal{C})$, $i \in \{1, \dots, 144\}$, showing the partitioning of the automorphism group based on the logical operations they implement. The second wheel (right) is the same figure shown without the indexing.  }
  \label{fig 2 wheels}
\end{figure}

\begin{figure}
  \centering
  \includegraphics[scale=1.3,trim= 0 65 0 0,clip]{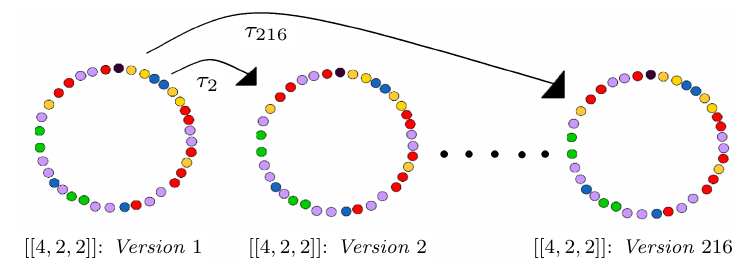}
  \caption{Figure showing the conjugacy classes implemented by the automorphism groups of the $216$ distinct equivalent versions of the $[[4,2,2]]$ code. This figure highlights how the conjugacy classes of the automorphism-induced logical actions are unaffected by code equivalence. Taken together, the $144$ automorphisms for each of the $216$ versions of $[[4,2,2]]$ exhaust the Hamming group of $\mathcal{C}$.}
  \label{fig 3 wheels}
\end{figure}

\newpage

\twocolumn

Every block of four automorphisms in Figure~\ref{fig 2 wheels} implements the \textit{same} logical action, despite the physical difference in the associated circuits. Hence, different automorphisms realising the same logical action may incur different SWAP or Clifford resource costs. 

Consider the automorphism $\pi^{46}$,
printed in red in Fig. \ref{fig 2 wheels} to show that it implements a logical operation from conjugacy class $6$. 
Given that we are working with a code $\mathcal{C}$ defined by the generator-basis matrix given in Eq.~\eqref{eqn og generator} and given that: 
\begin{center} $\pi^{46} \eqsim$
\begin{quantikz}[background color=black!5!white]
\lstick{}  & \push{  \begin{bsmallmatrix}
    1 & 1 \\
    0 & 1
\end{bsmallmatrix}} & \permute{2,3,4,1} & \push{} \\
\lstick{} &  \push{  \begin{bsmallmatrix}
    1 & 1 \\
    0 & 1
\end{bsmallmatrix}} & \push{} & \push{} \\
\lstick{} &  \push{  \begin{bsmallmatrix}
    1 & 1 \\
    0 & 1
\end{bsmallmatrix}}  & \push{} & \push{} \\
\lstick{} &   \push{  \begin{bsmallmatrix}
    1 & 1\\
    0 & 1
\end{bsmallmatrix}} & \push{} & \push{}
\end{quantikz},
\end{center}
we find that $\pi^{46}$ has the following logical effect
\begin{align}
    \mathcal{L}:=L(\pi^{46}, \mathcal{B}) = \begin{bmatrix}
        1 & 1 & 0 & 0 \\
        0 & 1 & 0 & 0 \\
        0 & 1 & 1 & 0 \\
        1 & 1 & 1 & 1
    \end{bmatrix}.
\end{align}  Let us call this logical operation $\mathcal{L}$. 
We can show that $\mathcal{L}$ is a member of Conjugacy Class $6$ by showing that it satisfies Proposition \ref{eqn L=ALAinv}, whereby $\exists A \in Sp(4,2)$ such that $\mathcal{L}$ is conjugate to the conjugacy class $6$ representative.
$$(A^{-1})^{T}\mathcal{L}A^{T} =$$ $$ \begin{bmatrix}
    1 & 0 & 0 & 0 \\
    1 & 0 & 0 & 1 \\
    1 & 1 & 1 & 1 \\
    1 & 1 & 0 & 0 
\end{bmatrix} \begin{bmatrix}
   1 & 1 & 0 & 0 \\
   0 & 1 & 0 & 0 \\
   0 & 1 & 1 & 0 \\
   1 & 1 & 1 & 1
\end{bmatrix} \begin{bmatrix}
   1  & 0 & 0 & 0 \\
   1 & 0 & 0 & 1 \\
   1 & 1 & 1 & 1 \\
   1 & 1 & 0 & 0
\end{bmatrix} =$$ $$ \begin{bmatrix}
    0 & 0 & 1 & 1 \\
    0 & 0 & 1 & 0 \\
    0 & 1 & 0 & 0 \\
    1 & 1 & 0 & 0
\end{bmatrix},$$
which is the class representative of conjugacy class $6$ according to Table \ref{table of class reps of SP4,2}.

As per Eq.~\eqref{eqn order of sigma}, $|\pi^{46}_{\sigma}| = 4-1=3$. This is confirmed by looking at the circuit diagram of $\pi^{46}$ in which the qubits are permuted as follows:  $q_{1} \rightarrow q_{2}, q_{2} \rightarrow q_{3}, q_{3} \rightarrow q_{4}, q_{4} \rightarrow q_{1}$, i.e. $\sigma = (2,3,4,1)$. Therefore, $\sigma$ can be implemented via SWAP$_{1,2}$SWAP$_{2,3}$SWAP$_{3,4}$ as demonstrated by the following circuit diagram: 

\begin{center}
   $\pi^{46}_{\sigma} \eqsim$ \begin{quantikz}[background color=black!5!white]
        \lstick{} & \permute{1,2,4,3} & \permute{1,3,2,4} & \permute{2,1,3,4} \\
        \lstick{} & & &  \\
        \lstick{} & & &  \\
        \lstick{} & & &
    \end{quantikz}.
\end{center}

Therefore, $\mathcal{L}$ can be implemented with the resources $|\pi^{46}_{\sigma}|=3$ and---as can be read from the circuit diagram---$|\pi^{46}_{e}|=4$, meaning that it has a control-Clifford cost of $7(3)+4=25$ and a local Clifford cost of value $|\pi_{e}|=4$. However, both of these costs can be improved upon by exploiting basis changes and code equivalence. In this example, we will optimise over control-Clifford cost (metric $1$) only.

 Searching among the other $36$ automorphisms (see Figure~\ref{fig 2 wheels}) which implement logical operations in Class $6$, $[\mathcal{L}]$ (i.e. every automorphism indexed in red in Figure~\ref{fig 2 wheels}), we find that  
\begin{center} $\pi^{30} \eqsim$
\begin{quantikz}[background color=black!5!white]
\lstick{} &   \push{  \begin{bsmallmatrix}
    1 & 1 \\
    0 & 1
\end{bsmallmatrix}} & \permute{2,1,3,4} & \push{} \\
\lstick{} &  \push{  \begin{bsmallmatrix}
    1 & 1 \\
    0 & 1
\end{bsmallmatrix}} & \push{} & \push{} \\
\lstick{} & \push{  \begin{bsmallmatrix}
    1 & 1 \\
    0 & 1
\end{bsmallmatrix}}  & \push{} & \push{} \\
\lstick{}  & \push{  \begin{bsmallmatrix}
    1 & 1\\
    0 & 1
\end{bsmallmatrix}} & \push{} & \push{}
\end{quantikz}
\end{center}
has parameters $|\pi^{30}_{\sigma}|=1$ and $|\pi^{30}_{e}|=4$, meaning that $\pi^{30}$ has control-Clifford cost $7(1) + 4 = 11$. This is a significant improvement the control-Clifford cost of $\pi^{46}$. Despite, $\pi^{30}$ not implementing $\mathcal{L}$, it does implement a logical operation in the same conjugacy class as our desired logical operation $$L(\pi^{30}, \mathcal{B}) = \begin{bmatrix}
     1 & 1 & 1 & 1 \\
     0 & 1 & 1 & 0 \\
     0 & 0 & 1 & 0 \\
     0 & 0 & 1 & 1
 \end{bmatrix} \in [\mathcal{L}].$$ By Proposition \ref{eqn L=ALAinv}, we are guaranteed that $\exists \text{ } A \in Sp(4,2)$ such that $ L(\pi^{30},A\mathcal{B})=\mathcal{L}$, i.e. $\exists$ $A \in Sp(4,2)$ which satisfies $\mathcal{L} = (A^{-1})^{T} (L(\pi^{30}, \mathcal{B}))A^{T}.$ One such $A$ is:
\begin{align}\label{Amat}A = \begin{bmatrix}
 0 & 0 & 0 & 1 \\
 0 & 0 & 1 & 0 \\
 0 & 1 & 0 & 0 \\
 1 & 0 & 0 & 0
\end{bmatrix},\end{align}
meaning that we would define the generator-basis matrix as:
$$\begin{bmatrix}
    \mathcal{G} \\ \hline A\mathcal{B} 
\end{bmatrix} = \begin{bmatrix}
    1 & 1 & 1 & 1 \\
    \omega & \omega & \omega & \omega \\
    \hline 
1 & 0 & 1 & 0 \\
\omega & 0 & \omega & 0 \\
\omega & \omega & 0 & 0 \\
1 & 1 & 0 & 0 
\end{bmatrix}.$$

We can improve the control-Clifford cost of implementing $\mathcal{L}$ even further by exploiting code equivalence. By Eq.~\eqref{eqn order tau}, the $[[4,2,2]]$ code admits $216$ distinct equivalent representations, each with a unique automorphism group. These equivalent representations of codes are available under transformation by elements of the transversal group $\tau \in \mathcal{T}$ (Figure~\ref{fig 2 dots} demonstrates this concept and Appendix \ref{appx: transversal GAP} details how to construct $\mathcal{T}$ via GAP). Theorem \ref{thm transversal} shows that code equivalence preserves logical action, but not physical circuit implementation i.e. $L(\pi, \mathcal{B}) = L(\pi, \tau(\mathcal{B})), \text{ } \Gamma(\tau(\mathcal{C})) \neq \Gamma(\mathcal{C})$.

We find that optimal implementation of $\mathcal{L}$ in terms of Clifford count can be obtained by use of the transversal element $\tau$,
\begin{center}\label{tau diagram} $\tau \eqsim$
\begin{quantikz}[background color=black!5!white]
\lstick{} & \push{  \begin{bsmallmatrix}
    1 & 1 \\
    0 & 1
\end{bsmallmatrix}} &\push{} & \\
\lstick{}  & \push{} & \push{} & \\
\lstick{}  & \push{}  & \push{} & \\
\lstick{}   & \push{} & \push{} &
\end{quantikz}.
\end{center}
We transform the generator-basis matrix to obtain the equivalent code $\tau(\mathcal{C})= \mathcal{C'}$
$$\begin{bmatrix}
    \tau(\mathcal{G}) \\
    \hline \tau(\mathcal{B})
\end{bmatrix} = \begin{bmatrix}
   1 & 1 & 1 & 1 \\
   \omega^{2} & \omega & \omega & \omega \\
    \hline 
    1 & 1 & 0 & 0 \\
    \omega^{2} & \omega & 0 & 0 \\
    \omega^{2} & 0 & \omega & 0 \\
    1 & 0 & 1 & 0
\end{bmatrix}.$$
Transformation of $\pi^{30}$ by $\tau$ yields $\pi'^{30} \in \Gamma(\mathcal{C}')$:
\smallskip
\begin{center} $\pi'^{30} \eqsim$
\begin{quantikz}[background color=black!5!white]
\lstick{} & \push{} & \permute{2,1,3,4} & \push{} &\\
\lstick{} & \push{} & \push{} & \push{} & \\
\lstick{} & \push{  \begin{bsmallmatrix}
    1 & 1 \\
    0 & 1
\end{bsmallmatrix}} & \push{}  & \push{} &\\
\lstick{} &  \push{  \begin{bsmallmatrix}
    1 & 1\\
    0 & 1
\end{bsmallmatrix}} & \push{} & \push{} &
\end{quantikz},
\end{center}
which has the ideal cost of $7|\pi'^{30}_{\sigma}| + |\pi'^{30}_{e}|=9$ and implements a logical operation in $[\mathcal{L}]$: $$L(\pi'^{30},\tau(\mathcal{B}))= \begin{bmatrix}
    1 & 1 & 1 & 1 \\
    0 & 1 & 1 & 0 \\
    0 & 0 & 1 & 0 \\
    0 & 0 & 1 & 1
\end{bmatrix} \in [\mathcal{L}].$$ Once again, according to Proposition \ref{eqn L=ALAinv}, $\exists \text{ } A \in Sp(4,2)$ such that \begin{equation}\label{pi46 change of basis}
    (A^{-1})^{T}(L(\pi'^{30},\tau(\mathcal{B}))) A^{T}=\mathcal{L},
\end{equation}
We use the $A$ given in Eq.~\eqref{Amat} because (as per Theorem \ref{thm 1}) $L(\pi^{30},\mathcal{B}) =L(\pi'^{30},\tau(\mathcal{B}))$. As such, we can define the code of interest by its generator-basis matrix

\begin{align}\label{GBtauA}\begin{bmatrix}
    \mathcal{\tau(G)} \\ \hline \tau(A\mathcal{B})
\end{bmatrix} = \begin{bmatrix}
1 & 1 & 1 & 1 \\
\omega^{2} & \omega & \omega & \omega 
\\ \hline
    1 & 0 & 1 & 0 \\
    \omega^{2} & 0 & \omega & 0 \\
    \omega^{2} & \omega & 0 & 0 \\
    1 & 1 & 0 & 0
\end{bmatrix}.\end{align}

Finally, according to our framework, $$L(\pi'_{30}, \tau(A\mathcal{B})) = \begin{bmatrix}
   1 & 1 & 0 & 0 \\
   0 & 1 & 0 & 0 \\
   0 & 1 & 1 & 0 \\
   1 & 1 & 1 & 1
\end{bmatrix}= \mathcal{L}.$$ 
Therefore, $\mathcal{L}$ can be implemented with the physical circuit 
$\pi'_{30}$ with control-Clifford cost $7|\pi^{30}_{\sigma}|+|\pi'^{30}_{e}|=9$. This proves that code equivalence not only significantly enlarges the search space of valid physical circuits, but also that this broader circuit pool can produce strictly better implementations than those obtainable from basis change alone.

\vspace{0.5cm}

\onecolumn

\begin{tabular}{|p{1.6in}|p{1.6in}|p{1.6in}|p{1.2in}|}\hline 
\multicolumn{4}{|c|}{Summary of $[[4,2,2]]$ example for controlled-Clifford metric $7|\pi^{i}_{\sigma}| + |\pi^{i}_{e}|$} \\ \hline 
\textbf{Generator-Basis Matrix}  & \textbf{Automorphism $\pi^{i}$} & \textbf{Logical Operation} & \textbf{Metric cost}  \\
\hline
$\begin{bmatrix}
    \mathcal{G} \\ \hline \mathcal{B}
\end{bmatrix}$  & $\pi^{46}$ & $\mathcal{L}$& $25$  \\
 \hline

 $\begin{bmatrix}
    \mathcal{G} \\ \hline \mathcal{B}
\end{bmatrix}$  & $\pi^{30}$ & $[\mathcal{L}]\ni L(\pi^{30},\mathcal{B}) \neq \mathcal{L}$ & $11$  \\
 \hline
 
 $\begin{bmatrix}
    \mathcal{G} \\ \hline A\mathcal{B}
\end{bmatrix}$  & $\pi^{30}$ & $\mathcal{L}$ & $11$   \\
\hline 

  $\begin{bmatrix}
    \tau(\mathcal{G}) \\ \hline \tau(A\mathcal{B})
\end{bmatrix}$  & $\pi'^{30}$ & $\mathcal{L}$ & $9$  \\
 \hline
\end{tabular}\label{table of results for example with figure}

\begin{figure}
  \centering
  \includegraphics[scale=0.8,trim= 10 120 120 20,clip]{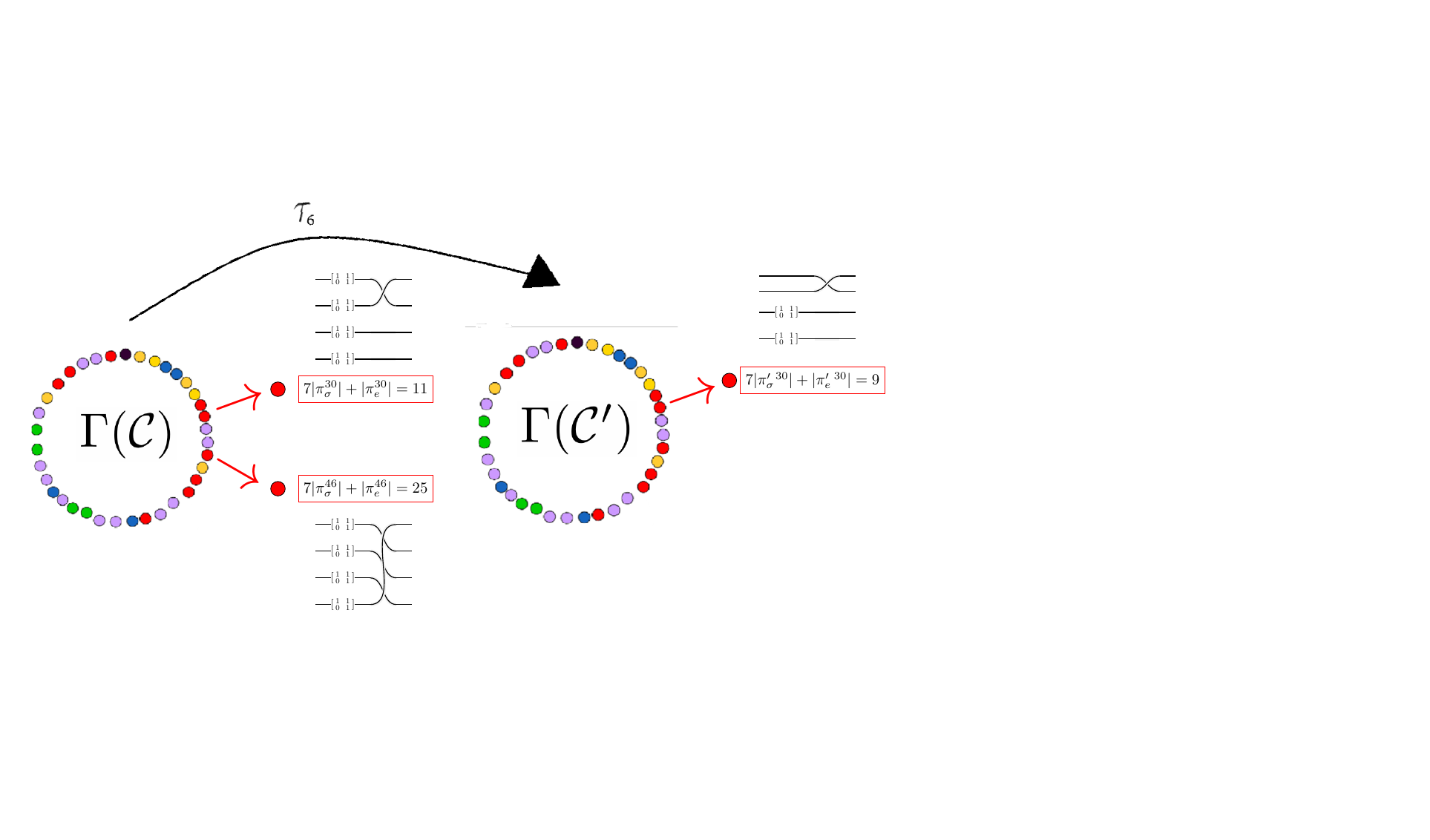}
  \caption{ Relating to $[[4,2,2]]$ example, showing that the best implementation of $\mathcal{L}$ is given by $\pi'^{30}$. $\Gamma(\mathcal{C'})$ denotes the automorphism group obtained by using (in our case), the $6^{th}$ element of the transversal group $\tau^{6} \in \mathcal{T}$ (the circuit diagram of this transversal $\tau$ is given explicitly in the example).  }
  \label{fig 2 dots}
\end{figure}

\newpage

\onecolumn
\section{Results}\label{section results}
The same methodology as detailed in the example of Sec.~\ref{section example with fig} was applied uniformly to all small stabiliser codes; the corresponding results are summarised in the following tables.

\footnotesize
\newsavebox{\circA}
\newsavebox{\circB}
\newsavebox{\circC}
\newsavebox{\circD}
\newsavebox{\circE}
\newsavebox{\circF}
\newsavebox{\circG}

\savebox{\circA}{%
\begin{quantikz}
\lstick{} & \push{} & \permute{3,2,4,1} & \push{} & \\
\lstick{} & \push{  \begin{bsmallmatrix}
    0 & 1 \\
    1 & 1
\end{bsmallmatrix}} &  & \push{} & \\
\lstick{} & \push{} &  \push{} & \push{} \\
\lstick{} & \push{} &  \push{} & \push{}
\end{quantikz}%
}

\savebox{\circB}{%
\begin{quantikz}
\lstick{} & \push{\begin{bsmallmatrix}
    0 & 1 \\
    1 & 0 
\end{bsmallmatrix}} & \permute{1,4,3,2}  & \push{} & \push{}  \\
\lstick{} & \push{} & \push{}  & \push{} & \push{} & \\
\lstick{} & \push{\begin{bsmallmatrix}
    0 & 1 \\
    1 & 0 
\end{bsmallmatrix}}  & \push{}  & \push{}  & \push{}\\
\lstick{} & \push{}  & \push{} & \push{} & \push{}
\end{quantikz}%
}

\savebox{\circC}{%
\begin{quantikz}
\lstick{} &  & \push{  \begin{bsmallmatrix}
    1 & 1 \\
    0 & 1
\end{bsmallmatrix}} &\push{} & \push{} \\
\lstick{} & & \push{  \begin{bsmallmatrix}
    1 & 1 \\
    0 & 1
\end{bsmallmatrix}} & \push{} & \push{} \\
\lstick{} & & \push{  \begin{bsmallmatrix}
    1 & 1 \\
    0 & 1
\end{bsmallmatrix}}  & \push{} & \push{} \\
\lstick{} &  & \push{  \begin{bsmallmatrix}
    1 & 1\\
    0 & 1
\end{bsmallmatrix}} & \push{} & \push{}
\end{quantikz}%
}

\savebox{\circD}{%
\begin{quantikz}
\lstick{} &  & \push{  \begin{bsmallmatrix}
    0 & 1 \\
    1 & 1
\end{bsmallmatrix}} &\push{} & \push{}\\
\lstick{} & & \push{  \begin{bsmallmatrix}
    0 & 1 \\
    1 & 1
\end{bsmallmatrix}} & \push{} & \push{} \\
\lstick{} & & \push{  \begin{bsmallmatrix}
    0 & 1 \\
    1 & 1
\end{bsmallmatrix}}  & \push{} & \push{} \\
\lstick{} &  & \push{  \begin{bsmallmatrix}
    0 & 1\\
    1 & 1
\end{bsmallmatrix}} & \push{} & \push{}
\end{quantikz}%
}

\savebox{\circE}{%
\begin{quantikz}
\lstick{} & \push{} & \permute{2,3,1,4}  & \push{}  &\\
\lstick{} & \push{} & \push{} & \push{}  \\
\lstick{} & \push{}  & \push{}  & \push{} \\
\lstick{} & \push{  \begin{bsmallmatrix}
    0 & 1\\
    1 & 1
\end{bsmallmatrix}} & \push{} & \push{} &
\end{quantikz}%
}

\savebox{\circF}{%
 \begin{quantikz}
\lstick{} & \push{} & \permute{2,1,3,4} & \push{}  &\\
\lstick{} & \push{} & \push{} & \push{} & \\
\lstick{} & \push{  \begin{bsmallmatrix}
    1 & 1 \\
    0 & 1
\end{bsmallmatrix}} & \push{}  & \push{} &\\
\lstick{} & \push{  \begin{bsmallmatrix}
    1 & 1\\
    0 & 1
\end{bsmallmatrix}} & \push{} & \push{} &
\end{quantikz}%
}

\savebox{\circG}{%
\begin{quantikz}
\lstick{} & \push{} & \permute{2,1,3,4} & \push{} &\\
\lstick{} & \push{  \begin{bsmallmatrix}
    1 & 1 \\
    1 & 0
\end{bsmallmatrix}} & \push{} & \push{} & \\
\lstick{} &  \push{  \begin{bsmallmatrix}
    0 & 1 \\
    1 & 1
\end{bsmallmatrix}}  & \push{} & \push{} &\\
\lstick{} & \push{  \begin{bsmallmatrix}
    0 & 1\\
    1 & 1
\end{bsmallmatrix}} & \push{} & \push{} &
\end{quantikz}%
}

\begin{tabular}{|c|c|c|c|c|c|}  
 \hline 
 \multicolumn{6}{|c|}{Optimising over controlled-Clifford metric $7|\pi^{ }_{\sigma}| + |\pi^{ }_{e}|$}\\
 \hline
Code & Class & Physical Circuit $\pi$ & $7|\pi_{\sigma}| + |\pi_{e}|$ & 
$\left[\begin{array}{cccc}
 \mathcal{G} \\ \hline \mathcal{B}
 \end{array}\right]$
&  $L(\pi, \mathcal{B})$  \\
\hline
$[[4,1,2]]$ & $2$ &  
\usebox{\circA} 

 & $15$ & $\left[\begin{array}{cccc}
    1 & \omega^{2} & \omega^{2} & 0 \\
    0 & 1 & 1 & \omega^{2} \\
    \omega & 0 & \omega & \omega \\ \hline 
    \omega & 1 & 0 & 0 \\
    0 & \omega & \omega & 0 
\end{array}\right]$ & $\begin{bmatrix}
    0 & 1 \\
    1 & 1
\end{bmatrix}$ \\ \hline

& $3$ &  \usebox{\circB}

& $9$ & $\left[\begin{array}{cccc}
    1 & \omega^{2} & \omega & 0 \\
    0 & 1 & \omega^{2} & 1 \\
    \omega & 0 & 1 & \omega^{2} \\ \hline 
    \omega & 1 & 0 & 0 \\
    0 & \omega & 1 & 0 
\end{array}\right]$ & $\begin{bmatrix}
    0 & 1 \\
    1 & 0 
\end{bmatrix}$ \\ \hline
$[[4,2,2]]$ & $2$ &
\usebox{\circC}
  & $4$ & $\left[\begin{array}{cccc}
    1 & 1 & 1 & 1 \\
    \omega & \omega & \omega & \omega \\ \hline
    1 & 1 & 0 & 0 \\
    \omega & \omega & 0 & 0 \\
    \omega & 0 & \omega & 0 \\
    1 & 0 & 1 & 0 
\end{array}\right]$ & $\begin{bmatrix}
  1 & 1 & 0 & 0 \\
  0 & 1 & 0 & 0 \\
  0 & 0 & 1 & 0 \\
  0 & 0 & 1 & 1
\end{bmatrix}$ \\
  \hline 

  & $4$ & \usebox{\circD} & $4$ & $\left[\begin{array}{cccc}
    1 & 1 & 1 & 1 \\
    \omega & \omega & \omega & \omega \\
    \hline 
    1 & 1 & 0 & 0 \\
    \omega & \omega & 0 & 0 \\
    \omega & 0 & \omega & 0 \\
    1 & 0 & 1 & 0 
\end{array}\right]$ & $\begin{bmatrix}
    0 & 1 & 0 & 0 \\
    1 & 1 & 0 & 0 \\
    0 & 0 & 1 & 1 \\
    0 & 0 & 1 & 0 
\end{bmatrix}$ \\
\hline 
& $5$ &  \usebox{\circE}
& 15 & $\left[\begin{array}{cccc}
    \omega^{2} & \omega & 1 & 1 \\
    1 & \omega^{2} & \omega & \omega \\
    \hline 
    \omega^{2} & \omega & 0 & 0 \\
    1 & \omega^{2} & 0 & 0 \\
    1 & 0 & \omega & 0 \\
    \omega^{2} & 0 & 1 & 0 
\end{array}\right]$ & $\begin{bmatrix}
    0 & 1 & 1 & 0 \\
    1 & 1 & 1 & 1 \\
    1 & 1 & 0 & 0 \\
    0 & 1 & 0 & 0
\end{bmatrix}$ \\ \hline 
& $6$ & \usebox{\circF} & $9$ & $\left[\begin{array}{cccc}
    1 & 1 & 1 & 1 \\
    \omega^{2} & \omega & \omega & \omega \\ \hline 
    1 & 1 & 0 & 0 \\
    \omega^{2} & \omega & 0 & 0 \\
    \omega^{2} & 0 & \omega & 0 \\
    1 & 0 & 1 & 0 
\end{array}\right]$ & $\begin{bmatrix}
1 & 1 & 1 & 1 \\
0 & 1 & 1 & 0 \\
0 & 0 & 1 & 0 \\
0 & 0 & 1 & 1
\end{bmatrix}$ \\
\hline & $9$ & \usebox{\circG} & $10$ & $\left[\begin{array}{cccc}
    \omega & 1 & 1 & 1 \\
    \omega^{2} & \omega & \omega & \omega \\
    \hline 
    \omega & 1 & 0 & 0 \\
    \omega^{2} & \omega & 0 & 0 \\
    \omega^{2} & 0 & \omega & 0 \\
    \omega & 0 & 1 & 0
\end{array}\right]$ & $\begin{bmatrix}
   0 & 1 & 1 & 0 \\
   1 & 1 & 1 & 1 \\
   0 & 0 & 1 & 1 \\
   0 & 0 & 1 & 0 
\end{bmatrix}$ \\
\hline
\end{tabular}

\newsavebox{\circAA}
\newsavebox{\circBB}
\newsavebox{\circCC}
\newsavebox{\circDD}
\newsavebox{\circEE}
\newsavebox{\circFF}
\newsavebox{\circGG}

\savebox{\circAA}{%
\begin{quantikz}
\lstick{} & \push{  \begin{bsmallmatrix}
    0 & 1 \\
    1 & 0
\end{bsmallmatrix}} & \permute{1,5,3,4,2}  &\push{} &\\
\lstick{} & \push{} & \push{} & \push{} & \\
\lstick{} & \push{  \begin{bsmallmatrix}
    0 & 1 \\
    1 & 0
\end{bsmallmatrix}} &   & \push{} &\\
\lstick{} & \push{  \begin{bsmallmatrix}
    0 & 1\\
    1 & 0
\end{bsmallmatrix}} &  & \push{} & \\
\lstick{} & \push{} & \push{} & \push{} &
\end{quantikz}%
}

\savebox{\circBB}{%
\begin{quantikz}
\lstick{} &  & \push{  \begin{bsmallmatrix}
    1 & 1 \\
    1 & 0
\end{bsmallmatrix}} &\push{} & \push{}\\
\lstick{} & & \push{  \begin{bsmallmatrix}
    1 & 1 \\
    1 & 0
\end{bsmallmatrix}} & \push{} & \push{} \\
\lstick{} & & \push{  \begin{bsmallmatrix}
    0 & 1 \\
    1 & 1
\end{bsmallmatrix}}  & \push{} & \push{} \\
\lstick{} &  & \push{  \begin{bsmallmatrix}
    1 & 1\\
    1 & 0
\end{bsmallmatrix}} & \push{} & \push{} \\
\lstick{} &  & \push{  \begin{bsmallmatrix}
    0 & 1\\
    1 & 1
\end{bsmallmatrix}} & \push{} & \push{}
\end{quantikz}%
}

\savebox{\circCC}{%
\begin{quantikz}
\lstick{} &  \push{  \begin{bsmallmatrix}
    0 & 1 \\
    1 & 0
\end{bsmallmatrix}} & \permute{1,2,3,5,4} &\push{} &\\
\lstick{} & \push{  \begin{bsmallmatrix}
    1 & 0 \\
    1 & 1
\end{bsmallmatrix}} & \push{} & \push{} & \\
\lstick{} &  \push{  \begin{bsmallmatrix}
    0 & 1 \\
    1 & 0
\end{bsmallmatrix}} & \push{} & \push{} &\\
\lstick{} & \push{} & \push{} & \push{} & \\
\lstick{} & \push{} & \push{} & \push{} &
\end{quantikz}%
}

\savebox{\circDD}{%
\begin{quantikz}
\lstick{} & \push{}  & \push{} & \permute{2,1,5,4,3} & \push{} \\
\lstick{} & \push{} & \push{} & \push{} & \push{} \\
\lstick{} & \push{} & \push{} & \push{} & \push{}\\
\lstick{} & \push{} & \push{} & \push{} & \push{} \\
\lstick{} & \push{} & \push{} & \push{} & \push{}
\end{quantikz}%
}

\savebox{\circEE}{%
\begin{quantikz}
\lstick{}  & &  \push{} & \push{} & \push{}\\
\lstick{} & & \push{  \begin{bsmallmatrix}
    1 & 1 \\
    0 & 1
\end{bsmallmatrix}} & \push{} & \push{}  \\
\lstick{} & &  \push{} & \push{} & \push{}\\
\lstick{} & &  \push{} & \push{} & \push{}\\
\lstick{} & &  \push{} & \push{} & \push{}
\end{quantikz}%
}

\savebox{\circFF}{%
\begin{quantikz}
\lstick{} & & \push{  \begin{bsmallmatrix}
    1 & 1 \\
    0 & 1
\end{bsmallmatrix}} & \push{} & \push{} \\
\lstick{} & & \push{  \begin{bsmallmatrix}
    1 & 1 \\
    0 & 1
\end{bsmallmatrix}} & \push{} & \push{} \\
\lstick{} & &  \push{} & \push{} & \push{} \\
\lstick{} & &  \push{} & \push{} & \push{} \\
\lstick{} & &  \push{} & \push{} & \push{}
\end{quantikz}%
}

\savebox{\circGG}{%
\begin{quantikz}
\lstick{} &  & \push{} & \permute{2,1,5,4,3} & \\
\lstick{} & & \push{  \begin{bsmallmatrix}
    1 & 1 \\
    0 & 1
\end{bsmallmatrix}} & \push{} &  \\
\lstick{} & &  \push{} & \push{} & \\
\lstick{} & &  \push{} & \push{} & \\
\lstick{} & &  \push{} & \push{} &
\end{quantikz}%
}

\begin{tabular}{|c|c|c|c|c|c|}  
 \hline 
 \multicolumn{6}{|c|}{Optimising over controlled-Clifford metric $7|\pi^{ }_{\sigma}| + |\pi^{ }_{e}|$}\\

 \hline
Code & Class & Physical Circuit $\pi$ & $7|\pi_{\sigma}| + |\pi_{e}|$ & $\begin{bmatrix}
    \mathcal{G} \\ \hline \mathcal{B}
\end{bmatrix}$ &  $L(\pi, \mathcal{B})$  \\

\hline

$[[5,1,2]]$ & $3$ & \usebox{\circAA} & $10$ & $\begin{bmatrix}
    \omega & \omega & 0 & \omega & 0 \\
    0 & 0 & \omega & \omega & 1 \\
    0 & 1 & 1 & 1 & 0 \\
    1 & 0 & 0 & 1 & \omega \\ \hline 
    0 & \omega & 0 & \omega & 1 \\
    0 & 0 & 1 & 0 & \omega
\end{bmatrix}$ & $\begin{bmatrix}
    0 & 1 \\
    1 & 0 
\end{bmatrix}$ \\
\hline 
$[[5,1,3]]$ & $2$ & \usebox{\circBB} & $5$ & $\begin{bmatrix}
\omega^{2} & 1 & \omega & 1 & 0 \\
1 & 0 & \omega & \omega^{2} & 1 \\
0 & 1 & 1 & \omega^{2} & \omega \\
1 & \omega^{2} & 0 & 1 & \omega \\ \hline 
\omega & 1 & 1 & 0 & 0 \\
1 & \omega^{2} & \omega & 0 & 0
\end{bmatrix}$ & $\begin{bmatrix}
    0 & 1 \\
    1 & 1
\end{bmatrix}$ \\
\hline & $3$ & \usebox{\circCC} & $10$ & $\begin{bmatrix}
\omega^{2} & 1 & \omega & 1 & 0 \\
1 & 0 & \omega & \omega^{2} & 1 \\
0 & 1 & 1 & \omega^{2} & \omega \\
1 & \omega^{2} & 0 & 1 & \omega \\ \hline 
\omega & 1 & 1 & 0 & 0 \\
1 & \omega^{2} & \omega & 0 & 0
\end{bmatrix}$ & $\begin{bmatrix}
    0 & 1 \\
    1 & 0
\end{bmatrix}$ \\ \hline

$[[5,2,1]]$ & $2$ & \usebox{\circDD} & $14$ & $\begin{bmatrix}
    0 & 0 & \omega & \omega & \omega \\
    0 & 1 & 1 & 1 & 0 \\
    1 & 0 & 0 & 1 & 1 \\
    \hline
    0 & \omega & 0 & \omega & \omega \\
    \omega & \omega & 0 & \omega & 0 \\
    0 & 0 & 1 & 0 & 1 \\
    0 & 0 & 0 & 1 & 1
\end{bmatrix}$ & $\begin{bmatrix}
   1 & 0 & 0 & 0 \\
   1 & 1 & 0 & 0 \\
   0 & 0 & 1 & 1 \\
   0 & 0 & 0 & 1
\end{bmatrix}$ \\
\hline 
& $3$ & \usebox{\circEE} & $1$ & $\begin{bmatrix}
    0 & 0 & \omega & \omega & \omega \\
    0 & 1 & 1 & 1 & 0 \\
    1 & 0 & 0 & 1 & 1 \\
    \hline
    0 & \omega & 0 & \omega & \omega \\
    \omega & \omega & 0 & \omega & 0 \\
    0 & 0 & 1 & 0 & 1 \\
    0 & 0 & 0 & 1 & 1
\end{bmatrix}$ & $\begin{bmatrix}
    1 & 0 & 0 & 0 \\
    0 & 1 & 0 & 0 \\
    1 & 1 & 1 & 0 \\
    1 & 1 & 0 & 1
\end{bmatrix}$ \\ \hline
& $6$ & \usebox{\circFF} & $2$ & $\begin{bmatrix}
    0 & 0 & \omega & \omega & \omega \\
    0 & 1 & 1 & 1 & 0 \\
    1 & 0 & 0 & 1 & 1 \\
    \hline
    0 & \omega & 0 & \omega & \omega \\
    \omega & \omega & 0 & \omega & 0 \\
    0 & 0 & 1 & 0 & 1 \\
    0 & 0 & 0 & 1 & 1
\end{bmatrix}$ & $\begin{bmatrix}
   1 & 0 & 0 & 0 \\
   0 & 1 & 0 & 0 \\
   1 & 1 & 1 & 0 \\
   1 & 0 & 0 & 1
\end{bmatrix}$ \\ \hline 
& $7$ & \usebox{\circGG}  & $15$ & $\begin{bmatrix}
    0 & 0 & \omega & \omega & \omega \\
    0 & 1 & 1 & 1 & 0 \\
    1 & 0 & 0 & 1 & 1 \\
    \hline
    0 & \omega & 0 & \omega & \omega \\
    \omega & \omega & 0 & \omega & 0 \\
    0 & 0 & 1 & 0 & 1 \\
    0 & 0 & 0 & 1 & 1
\end{bmatrix}$ & $\begin{bmatrix}
    1 & 0 & 0 & 0 \\
    1 & 1 & 0 & 0 \\
    0 & 0 & 1 & 1 \\
    1 & 1 & 0 & 1
\end{bmatrix}$ \\ \hline
\end{tabular}

\newsavebox{\circAAA}
\newsavebox{\circBBB}
\newsavebox{\circCCC}
\newsavebox{\circDDD}
\newsavebox{\circEEE}
\newsavebox{\circFFF}
\newsavebox{\circGGG}
\newsavebox{\circHHH}

\savebox{\circAAA}{%
 \begin{quantikz}
\lstick{} & \push{} & \push{  \begin{bsmallmatrix}
    1 & 0 \\
    1 & 1
\end{bsmallmatrix}} & \push{} & \push{} \\
\lstick{} & \push{} & \push{} & \push{} & \push{} \\
\lstick{} & \push{} & \push{  \begin{bsmallmatrix}
    1 & 0 \\
    1 & 1
\end{bsmallmatrix}} & \push{} & \push{} \\
\lstick{} & \push{} & \push{  \begin{bsmallmatrix}
    1 & 0 \\
    1 & 1
\end{bsmallmatrix}} & \push{} & \push{} \\
\lstick{} & \push{} & \push{  \begin{bsmallmatrix}
    1 & 0 \\
    1 & 1
\end{bsmallmatrix}} & \push{} & \push{} 
\end{quantikz}%
}

\savebox{\circBBB}{%
\begin{quantikz}
\lstick{} &  \push{} & \push{} & \permute{3, 2, 4, 1, 5
}  & \push{} \\
\lstick{} & \push{} & \push{} & \push{}  & \push{} \\
\lstick{} & \push{} & \push{} & \push{}  & \push{} \\
\lstick{} & \push{} & \push{} & \push{}  & \push{} \\
\lstick{} & \push{} & \push{} & \push{}  & \push{}
\end{quantikz}%
}

\savebox{\circCCC}{%
 \begin{quantikz}
\lstick{} &  &   \push{ \begin{bsmallmatrix}
    1 & 0 \\
    1 & 1
\end{bsmallmatrix}} & \permute{1,2,4,3,5
} & \\
\lstick{} & & \push{} & \push{} \\
\lstick{} & & \push{} & \push{} \\
\lstick{} & & \push{} & \push{} \\
\lstick{} & & \push{  \begin{bsmallmatrix}
    1 & 0 \\
    1 & 1
\end{bsmallmatrix}} & \push{}
\end{quantikz}%
}

\savebox{\circDDD}{%
 \begin{quantikz}
\lstick{} &  &   \push{ \begin{bsmallmatrix}
    1 & 0 \\
    1 & 1
\end{bsmallmatrix}} & \permute{3, 2, 4, 1, 5
} & \\
\lstick{} & & \push{} & \push{} \\
\lstick{} & & \push{} & \push{}  \\
\lstick{} & & \push{} & \push{} \push{} \\
\lstick{} & & \push{  \begin{bsmallmatrix}
    1 & 0 \\
    1 & 1
\end{bsmallmatrix}} & \push{}
\end{quantikz}%
}

\savebox{\circEEE}{%
 \begin{quantikz}
\lstick{} &  \push{} & \permute{1,4,3,5,2,6
} & \push{} \\
\lstick{} & \push{} & \push{} & \push{} \\
\lstick{} & \push{\begin{bsmallmatrix}
    1 & 1 \\
    1 & 0
\end{bsmallmatrix}} & \push{} & \push{}  \\
\lstick{} & \push{} & \push{} & \push{} \push{} \\
\lstick{} & \push{}  & \push{} & \push{} \\
\lstick{} & \push{} & \push{} & \push{}  \\
\end{quantikz}%
}

\savebox{\circFFF}{%
\begin{quantikz}
\lstick{} & \push{} & \permute{1,2,3,5,4,6
} & \push{} \\
\lstick{} & \push{\begin{bsmallmatrix}
    0 & 1 \\
    1 & 0
\end{bsmallmatrix}} & \push{} & \push{} \\
\lstick{} & \push{\begin{bsmallmatrix}
    0 & 1 \\
    1 & 0
\end{bsmallmatrix}} & \push{} & \push{}  \\
\lstick{} & \push{} & \push{} & \push{} \push{} \\
\lstick{} & \push{} & \push{} & \push{} \\
\lstick{} & \push{\begin{bsmallmatrix}
    1 & 0 \\
    1 & 1
\end{bsmallmatrix}} & \push{} & \push{}  \\
\end{quantikz}%
}

\savebox{\circGGG}{%
\begin{quantikz}
\lstick{} & \push{\begin{bsmallmatrix}
    0 & 1 \\
    1 & 1
\end{bsmallmatrix}} & \push{} & \push{} \\
\lstick{} & \push{\begin{bsmallmatrix}
    1 & 1 \\
    1 & 0
\end{bsmallmatrix}} & \push{} & \push{} \\
\lstick{} & \push{\begin{bsmallmatrix}
    0 & 1 \\
    1 & 1
\end{bsmallmatrix}} & \push{} & \push{} \\
\lstick{} & \push{\begin{bsmallmatrix}
    1 & 1 \\
    1 & 0
\end{bsmallmatrix}} & \push{} & \push{} \\
\lstick{} & \push{\begin{bsmallmatrix}
    0 & 1 \\
    1 & 1
\end{bsmallmatrix}} & \push{} & \push{} \\
\lstick{} & \push{\begin{bsmallmatrix}
    1 & 1 \\
    1 & 0
\end{bsmallmatrix}} & \push{} & \push{} \\
\lstick{} & \push{\begin{bsmallmatrix}
    0 & 1 \\
    1 & 1
\end{bsmallmatrix}} & \push{} & \push{} \\
\end{quantikz}%
}

\savebox{\circHHH}{%
\begin{quantikz}
 \lstick{} & \push{\begin{bsmallmatrix}
     1 & 1 \\
     0 & 1
 \end{bsmallmatrix}} & \push{} & \push{} \\
\lstick{} & \push{\begin{bsmallmatrix}
    1 & 0 \\
    1 & 1
\end{bsmallmatrix}} & \push{} & \push{} \\
\lstick{} & \push{\begin{bsmallmatrix}
    1 & 0 \\
    1 & 1
\end{bsmallmatrix}} & \push{} & \push{} \\
\lstick{} & \push{\begin{bsmallmatrix}
    1 & 0 \\
    1 & 1
\end{bsmallmatrix}} & \push{} & \push{} \\
\lstick{} & \push{\begin{bsmallmatrix}
    1 & 0 \\
    1 & 1
\end{bsmallmatrix}} & \push{} & \push{} \\
\lstick{} & \push{\begin{bsmallmatrix}
   0 & 1 \\
   1 & 0 
\end{bsmallmatrix}} & \push{} & \push{} \\
\lstick{} & \push{\begin{bsmallmatrix}
    0 & 1 \\
    1 & 0
\end{bsmallmatrix}} & \push{} & \push{} \\
\end{quantikz}%
}

\begin{tabular}{|c|c|c|c|c|c|}
 \hline 
 \multicolumn{6}{|c|}{Optimising over controlled-Clifford metric $7|\pi^{ }_{\sigma}| + |\pi^{ }_{e}|$}\\

 \hline
Code & Class & Physical Circuit $\pi$ & $7|\pi_{\sigma}| + |\pi_{e}|$ & $\begin{bmatrix}
    \mathcal{G} \\ \hline \mathcal{B}
\end{bmatrix}$ &  $L(\pi, \mathcal{B})$  \\
\hline
$[[5,2,2]]$ & $2$ & \usebox{\circAAA} & $4$ & $\begin{bmatrix}
    0 & \omega & 0 & 0 & \omega \\
    \omega & 0 & \omega & \omega & \omega \\
    1 & 1 & 1 & 1 & 1 \\ \hline 
    0 & 1 & 0 & 1 & 1 \\
    0 & 1 & 1 & 0 & 1 \\
    0 & 0 & \omega & 0 & \omega \\
    0 & 0 & 0 & \omega & \omega 
\end{bmatrix}$ & $\begin{bmatrix}
    1 & 0 & 0 & 0 \\
    0 & 1 & 0 & 0 \\
    0 & 1 & 1 & 0 \\
    1 & 0 & 0 & 1
\end{bmatrix}$ \\ \hline 
& $4$ & \usebox{\circBBB} & $14$ & $\begin{bmatrix}
    0 & \omega & 0 & 0 & \omega \\
    \omega & 0 & \omega & \omega & \omega \\
    1 & 1 & 1 & 1 & 1 \\ \hline 
    0 & 1 & 0 & 1 & 1 \\
    0 & 1 & 1 & 0 & 1 \\
    0 & 0 & \omega & 0 & \omega \\
    0 & 0 & 0 & \omega & \omega 
\end{bmatrix}$ & $\begin{bmatrix}
    1 & 1 & 0 & 0 \\
    1 & 0 & 0 & 0 \\
    0 & 0 & 0 & 1 \\
    0 & 0 & 1 & 1
\end{bmatrix}$ \\ \hline 
& $6$ & \usebox{\circCCC} & $9$ & $\begin{bmatrix}
    0 & \omega & 0 & 0 & \omega \\
    \omega & 0 & \omega & \omega & \omega \\
    1 & 1 & 1 & \omega^{2} & 1 \\ \hline 
    0 & 1 & 0 & \omega^{2} & 1 \\
    0 & 1 & 1 & 0 & 1 \\
    0 & 0 & \omega & 0 & \omega \\
    0 & 0 & 0 & \omega & \omega 
\end{bmatrix}$ & $\begin{bmatrix}
   0 & 1 & 0 & 0 \\
   1 & 0 & 0 & 0 \\
   1 & 0 & 0 & 1 \\
   0 & 1 & 1 & 0 
\end{bmatrix}$ \\ \hline 
& $9$ & \usebox{\circDDD} & $16$ & $\begin{bmatrix}
    0 & \omega & 0 & 0 & \omega \\
    \omega & 0 & \omega & \omega & \omega \\
    1 & 1 & 1 & \omega^{2} & 1 \\ \hline 
    0 & 1 & 0 & \omega^{2} & 1 \\
    0 & 1 & 1 & 0 & 1 \\
    0 & 0 & \omega & 0 & \omega \\
    0 & 0 & 0 & \omega & \omega 
\end{bmatrix}$ & $\begin{bmatrix}
   1 & 1 & 0 & 0 \\
   1 & 0 & 0 & 0 \\
   1 & 0 & 0 & 1 \\
   1 & 1 & 1 & 1
\end{bmatrix}$ \\ \hline
$[[6,1,3]]$ & $2$ & \usebox{\circEEE} & $15$ & $\begin{bmatrix}
    1 & 0 & 0 & 0 & 0 & \omega \\
    0 & 1 & 1 & \omega & \omega^{2} & 0 \\
    0 & \omega^{2} & 0 & \omega^{2} & \omega^{2} & \omega \\
    0 & 1 & \omega & 0 & \omega & \omega \\
    \omega & \omega^{2} & \omega & \omega & 0 & \omega^{2} \\ \hline 
    0 & 0 & \omega & \omega & \omega & 0 \\
    0 & 0 & 0 & 1 & \omega & \omega
\end{bmatrix}$ & $\begin{bmatrix}
    0 & 1 \\
    1 & 1
\end{bmatrix}$ \\ \hline
& $3$ & \usebox{\circFFF} & $10$ & $\begin{bmatrix}
    1 & 0 & 0 & 0 & 0 & \omega \\
    0 & 1 & 1 & \omega & \omega^{2} & 0 \\
    0 & \omega^{2} & 0 & \omega^{2} & \omega^{2} & \omega \\
    0 & 1 & \omega & 0 & 1 & \omega \\
    \omega & \omega^{2} & \omega & \omega & 0 & \omega^{2} \\ \hline 
    0 & 0 & \omega & \omega & 1 & 0 \\
    0 & 0 & 0 & 1 & 1 & \omega 
\end{bmatrix}$ & $\begin{bmatrix}
    1 & 0 \\
    1 & 1
\end{bmatrix}$ \\ \hline
\end{tabular}

\newpage

\onecolumn

\footnotesize

\begin{tabular}{|c|c|c|c|c|c|}
 \hline 
 \multicolumn{6}{|c|}{Optimising over controlled-Clifford metric $7|\pi_{\sigma}| + |\pi_{e}|$}\\ \hline 
  \hline
Code & Class & Physical Circuit $\pi$ & $7|\pi_{\sigma}|+ |\pi_{e}|$ & $\begin{bmatrix}
    \mathcal{G} \\ \hline \mathcal{B}
\end{bmatrix}$ &  $L(\pi, \mathcal{B})$  \\
\hline

$[[7,1,3]]$ & $2$ & \usebox{\circGGG} & $7$ & $\begin{bmatrix}
    1 & 0 & \omega & 0 & \omega & \omega^{2} & 0 \\
    1 & \omega & \omega & \omega & 0 & 0 & 0 \\
    1 & 0 & 0 & \omega & \omega & 0 & \omega^{2} \\
    \omega & 0 & \omega^{2} & 0 & \omega^{2} & \omega & 0 \\
    \omega & 1 & \omega^{2} & 1 & 0 & 0 & 0 \\
    \omega & 0 & 0 & 1 & \omega^{2} & 0 & 1 \\ \hline 
    0 & \omega & 0 & \omega & 0 & 0 & \omega^{2} \\
    0 & 1 & 0 & 1 & 0 & 0 & 1
\end{bmatrix}$ & $\begin{bmatrix}
    0 & 1 \\
    1 & 1
\end{bmatrix}$ \\ \hline 

 & $3$ & \usebox{\circHHH} & $7$ & $\begin{bmatrix}
    1 & 0 & \omega & 0 & \omega & \omega^{2} & 0 \\
    1 & \omega & \omega & \omega & 0 & 0 & 0 \\
    1 & 0 & 0 & \omega & \omega & 0 & \omega^{2} \\
    \omega & 0 & \omega^{2} & 0 & \omega^{2} & \omega & 0 \\
    \omega & 1 & \omega^{2} & 1 & 0 & 0 & 0 \\
    \omega & 0 & 0 & 1 & \omega^{2} & 0 & 1 \\ \hline 
    0 & \omega & 0 & \omega & 0 & 0 & \omega^{2} \\
    0 & 1 & 0 & 1 & 0 & 0 & 1
\end{bmatrix}$ & $ \begin{bmatrix}
    1 & 1 \\
    0 & 1
\end{bmatrix}$\\ \hline

\end{tabular}

\newpage

\onecolumn

\footnotesize

\newsavebox{\circone}
\newsavebox{\circtwo}
\newsavebox{\circthree}
\newsavebox{\circfour}
\newsavebox{\circfive}
\newsavebox{\circsix}
\newsavebox{\circseven}

\savebox{\circone}{
\begin{quantikz}
\lstick{} & \push{} & \permute{3,2,4,1}  & \push{}  \\
\lstick{} & \push{\begin{bsmallmatrix}
    0 & 1 \\
    1 & 1
\end{bsmallmatrix}} & \push{}  &  \push{} \\
\lstick{} & \push{} & \push{} &  \push{} \\
\lstick{} & \push{} & \push{}  & \push{}
\end{quantikz}
}

\savebox{\circtwo}{
\begin{quantikz}
\lstick{} & \push{} & \push{} & \permute{2,3,4,1} & \push{}  \\
\lstick{} & \push{} & \push{} & \push{} &  \push{} \\
\lstick{} & \push{} & \push{} & \push{} &  \push{} \\
\lstick{} & \push{} & \push{} & \push{} & \push{}
\end{quantikz}
}

\savebox{\circthree}{\begin{quantikz}
\lstick{} & \push{} & \push{} & \permute{1,2,4,3} & \push{}  \\
\lstick{} & \push{} & \push{} & \push{} &  \push{} \\
\lstick{} & \push{} & \push{} & \push{} &  \push{} \\
\lstick{} & \push{} & \push{} & \push{} & \push{}
\end{quantikz}
}

\savebox{\circfour}{\begin{quantikz}
\lstick{} & \push{} & \push{} & \permute{1,3,4,2} & \push{}  \\
\lstick{} & \push{} & \push{} & \push{} &  \push{} \\
\lstick{} & \push{} & \push{} & \push{} &  \push{} \\
\lstick{} & \push{} & \push{} & \push{} & \push{}
\end{quantikz}
}

\savebox{\circfive}{
 \begin{quantikz}
\lstick{} & \push{}  & \permute{2,3,1,4} & \push{} \\
\lstick{} & \push{}  & \push{} &  \push{} \\
\lstick{} & \push{}  & \push{} &  \push{} \\
\lstick{} &  \push{\begin{bsmallmatrix}
0 & 1 \\
1 & 1
\end{bsmallmatrix}}   & \push{} & \push{}
\end{quantikz}
}

\savebox{\circsix}{
\begin{quantikz}
\lstick{} & \push{} & \push{} & \permute{3,4,2,1} & \push{}  \\
\lstick{} & \push{} & \push{} & \push{} &   \push{} \\
\lstick{} & \push{} & \push{} & \push{} &  \push{} \\
\lstick{} & \push{} &  \push{} & \push{} & \push{}
\end{quantikz}
}

\savebox{\circseven}{\begin{quantikz}
\lstick{} & \push{} & \permute{2,3,4,1} & \push{}  \\
\lstick{} & \push{} & \push{} & \push{} &   \\
\lstick{} & \push{  \begin{bsmallmatrix}    0 & 1 \\   1 & 1 \end{bsmallmatrix}} &  \push{} & \push{} \\
\lstick{} & \push{} & \push{} & \push{} &  \\
\end{quantikz}
}

\begin{tabular}{|c|c|c|c|c|c|}
 \hline 
 \multicolumn{6}{|c|}{Optimising over local Clifford metric $|\pi^{ }_{e}|$}\\

 \hline
Code & Class & Physical Circuit $\pi$ & $|\pi_{e}|$ & $\begin{bmatrix}
    \mathcal{G} \\ \hline \mathcal{B}
\end{bmatrix}$ &  $L(\pi, \mathcal{B})$  \\
\hline
$[[4,1,2]]$ & $2$ &  \usebox{\circone} & $1$ & $\begin{bmatrix}
    1 & \omega^{2} & \omega^{2} & 0 \\
    0 & 1 & 1 & \omega^{2} \\
    \omega & 0 & \omega & \omega \\ \hline 
    \omega & 1 & 0 & 0 \\
    0 & \omega & \omega & 0 
\end{bmatrix}$ & $\begin{bmatrix}
    0 & 1 \\
    1 & 1
\end{bmatrix}$ \\
\hline 

& $3$ & \usebox{\circtwo} & $0$  & $\begin{bmatrix}
    1 & \omega^{2} & \omega & 0 \\
    0 & 1 & \omega^{2} & \omega \\
    \omega & 0 & 1 & \omega^{2} \\ \hline 
    \omega & 1 & 0 & 0 \\
    0 & \omega & 1 & 0 
\end{bmatrix}$ & $\begin{bmatrix}
    0 & 1 \\
    1 & 0 
\end{bmatrix}$ \\ \hline

$[[4,2,2]]$ & $2$ & \usebox{\circthree} & $0$ & $\begin{bmatrix}
    1 & 1 & 1 & 1 \\
    \omega & \omega & \omega & \omega \\ \hline 
    1 & 1 & 0 & 0 \\
    \omega & \omega & 0 & 0 \\
    \omega & 0 & \omega & 0 \\
    1 & 0 & 1 & 0
\end{bmatrix}$ & $\begin{bmatrix}
   1 & 0 & 0 & 1 \\
   0 & 1 & 1 & 0 \\
   0 & 0 & 1 & 0 \\
   0 & 0 & 0 & 1 
\end{bmatrix}$\\ \hline 

& $4$ & \usebox{\circfour} & $0$ & $\begin{bmatrix}
    1 & 1 & 1 & 1 \\
    \omega & \omega & \omega & \omega \\ \hline 
    1 & 1 & 0 & 0 \\
    \omega & \omega & 0 & 0 \\
    \omega & 0 & \omega & 0 \\
    1 & 0 & 1 & 0 
\end{bmatrix}$ & $\begin{bmatrix}
   0 & 0 & 0 & 1 \\
   0 & 0 & 1 & 0 \\
   0 & 1 & 1 & 0 \\
   1 & 0 & 0 & 1 
\end{bmatrix}$ \\ \hline 

& $5$ & \usebox{\circfive} & $1$ & $\begin{bmatrix}
    \omega^{2} & \omega & 1 & 1 \\
    1 & \omega^{2} & \omega & \omega \\ \hline 
    \omega^{2} & \omega & 0 & 0 \\
    1 & \omega^{2} & 0 & 0 \\
    1 & 0 & \omega & 0 \\
    \omega^{2} & 0 & 1 & 0 
\end{bmatrix}$ & $\begin{bmatrix}
0 & 1 & 1 & 0 \\
1 & 1 & 1 & 1 \\
1 & 1 & 0 & 0 \\
0 & 1 & 0 & 0 
\end{bmatrix}$ \\ \hline

& $6$ &  \usebox{\circsix} & $0$ & $\begin{bmatrix}
    1 & 1 & 1 & 1 \\
    \omega^{2} & \omega^{2} & \omega & \omega \\ \hline 
    1 & 1 & 0 & 0 \\
    \omega^{2} & \omega^{2} & 0 & 0 \\
    \omega^{2} & 0 & \omega & 0 \\
    1 & 0 & 1 & 0
\end{bmatrix}$ & $\begin{bmatrix}
1 & 1 & 1 & 1 \\
0 & 1 & 1 & 0 \\
0 & 0 & 1 & 0 \\
0 & 0 & 1 & 1 
\end{bmatrix}$ \\ \hline

& $9$ &  \usebox{\circseven} & $1$ & $\begin{bmatrix}
    \omega^{2} & \omega & 1 & 1 \\
    1 & \omega^{2} & \omega & \omega \\ \hline 
    \omega^{2} & \omega & 0 & 0 \\
    1 & \omega^{2} & 0 & 0 \\
    1 & 0 & \omega & 0 \\
    \omega^{2} & 0 & 1 & 0 
\end{bmatrix}$ & $\begin{bmatrix}
0 & 1 & 0 & 0 \\
1 & 1 & 0 & 0 \\
1 & 1 & 1 & 1 \\
0 & 1 & 1 & 0 
\end{bmatrix}$ \\ \hline

\end{tabular}

\newsavebox{\circoneone}
\newsavebox{\circtwotwo}
\newsavebox{\circthreethree}
\newsavebox{\circfourfour}
\newsavebox{\circfivefive}
\newsavebox{\circsixsix}
\newsavebox{\circsevenseven}

\savebox{\circoneone}{
\begin{quantikz}
\lstick{} & \push{} & \permute{2,3,5,4,1} & \push{}  \\
\lstick{} & \push{} & \push{} & \push{} &   \\
\lstick{} & \push{} & \push{} & \push{} &  \\
\lstick{} & \push{\begin{bsmallmatrix}
    0 & 1 \\
    1 & 0
\end{bsmallmatrix}} & \push{} &  \push{} \\
\lstick{} & \push{} & \push{} & \push{} &  
\end{quantikz}
}

\savebox{\circtwotwo}{
\begin{quantikz}
\lstick{} & \push{} & \permute{3,2,4,1,5} & \push{}  \\
\lstick{} & \push{\begin{bsmallmatrix}
    0 & 1 \\
    1 & 1
\end{bsmallmatrix}} & \push{} &  \push{} \\
\lstick{} & \push{} & \push{} & \push{} &   \\
\lstick{} & \push{} & \push{} & \push{} &  \\
\lstick{} & \push{} & \push{} & \push{} &  
\end{quantikz}
}

\savebox{\circthreethree}{
\begin{quantikz}
\lstick{} & \push{} & \permute{5,3,1,4,2} & \push{}  \\
\lstick{} & \push{} & \push{} & \push{} &   \\
\lstick{} & \push{} & \push{} & \push{} &  \\
\lstick{} & \push{\begin{bsmallmatrix}
    1 & 1 \\
    0 & 1
\end{bsmallmatrix}} & \push{} &  \push{} \\
\lstick{} & \push{} & \push{} & \push{} &  
\end{quantikz}
}

\savebox{\circfourfour}{
\begin{quantikz}
\lstick{} & \push{}  & \push{} & \permute{2,1,5,4,3} & \push{}   \\
\lstick{} & \push{} & \push{} & \push{} &  \push{}  \\
\lstick{} & \push{} & \push{} & \push{} &  \push{} \\
\lstick{} & \push{} & \push{} &  \push{} & \push{} \\
\lstick{} & \push{} & \push{} & \push{} &  \push{} 
\end{quantikz}
}

\savebox{\circfivefive}{
\begin{quantikz}
\lstick{} & \push{} & \push{} & \push{} & \push{}  \\
\lstick{} & \push{\begin{bsmallmatrix}
    1 & 1 \\
    0 & 1
\end{bsmallmatrix}} & \push{} &  \push{} & \push{} \\
\lstick{} & \push{} & \push{} & \push{} &   \push{} \\
\lstick{} & \push{} & \push{} & \push{} & \push{}  \\
\lstick{} & \push{} & \push{} & \push{} &  \push{}
\end{quantikz}
}

\savebox{\circsixsix}{
\begin{quantikz}
\lstick{} & \push{\begin{bsmallmatrix}
    1 & 1 \\
    0 & 1
\end{bsmallmatrix}} & \push{} & \push{} & \push{} \\
\lstick{} & \push{\begin{bsmallmatrix}
    1 & 1 \\
    0 & 1
\end{bsmallmatrix}} & \push{} & \push{} & \push{}  \\
\lstick{} & \push{} & \push{} & \push{} & \push{} \\
\lstick{} & \push{} & \push{} &  \push{} & \push{} \\
\lstick{} & \push{} & \push{} & \push{} &  \push{} 
\end{quantikz}
}

\savebox{\circsevenseven}{
 \begin{quantikz}
\lstick{} & \push{}  & \permute{2,1,5,4,3} & \push{}  \\
\lstick{} & \push{\begin{bsmallmatrix}
    1 & 1 \\
    0 & 1
\end{bsmallmatrix}}  & \push{} & \push{} &   \\
\lstick{} & \push{}  & \push{} & \push{} \\
\lstick{} & \push{}  &  \push{} & \push{} \\
\lstick{} & \push{}  & \push{} & \push{} 
\end{quantikz}
}

\begin{tabular}{|c|c|c|c|c|c|}
 \hline 
 \multicolumn{6}{|c|}{Optimising over local Clifford metric $|\pi^{ }_{e}|$}\\

 \hline
Code & Class & Physical Circuit $\pi$ & $ |\pi_{e}|$ & $\begin{bmatrix}
    \mathcal{G} \\ \hline \mathcal{B}
\end{bmatrix}$ &  $L(\pi, \mathcal{B})$  \\
\hline
$[[5,1,2]]$ & $3$ & \usebox{\circoneone} & $1$ & $\begin{bmatrix}
    \omega & 1 & 0 & \omega & 0 \\
    0 & 0 & \omega & \omega & 1 \\
    0 & \omega & 1 & 1 & 0 \\
    1 & 0 & 0 & 1 & \omega \\ \hline 
    0 & 1 & 0 & \omega & 1 \\
    0 & 0 & 1 & 0 & \omega
\end{bmatrix}$&  $\begin{bmatrix}
    0 & 1 \\ 1 & 0
\end{bmatrix}$ \\ \hline

$[[5,1,3]]$ & $2$ & \usebox{\circtwotwo} & $1$ & $\begin{bmatrix}
    \omega^{2} & \omega & \omega & 1 & 0 \\
    \omega & 0 & \omega & \omega & 1 \\
    0 & \omega & 1 & \omega & \omega^{2} \\
    \omega & \omega^{2} & 0 & 1 & \omega^{2} \\ \hline 
    1 & \omega & 1 & 0 & 0 \\
    \omega & \omega^{2} & \omega & 0 & 0 
\end{bmatrix}$ & $\begin{bmatrix}
    1 & 1 \\
    1 & 0 
\end{bmatrix}$ \\ \hline

& $3$ & \usebox{\circthreethree} & $1$ & $\begin{bmatrix}
    \omega^{2} & 1 & \omega & 1 & 0 \\
    1 & 0 & \omega & \omega^{2} & 1 \\
    0 & 1 & 1 & \omega^{2} & \omega \\
    1 & \omega^{2} & 0 & 1 & \omega \\ \hline 
    \omega & 1 & 1 & 0 & 0 \\
    1 & \omega^{2} & \omega & 0 & 0
\end{bmatrix}$ & $\begin{bmatrix}
    1 & 0 \\
    1 & 1
\end{bmatrix}$\\ \hline

$[[5,2,1]]$ & $2$ & \usebox{\circfourfour} & $0$ & $\begin{bmatrix}
    0 & 0 & \omega & \omega & \omega \\
    0 & 1 & 1 & 1 & 0 \\
    1 & 0 & 0 & 1 & 1 \\
    \hline 
    0 & \omega & 0 & \omega & \omega \\
    \omega & \omega & 0 & \omega & 0 \\
    0 & 0 & 1 & 0 & 1 \\
    0 & 0 & 0 & 1 & 1
\end{bmatrix}$ & $\begin{bmatrix}
1 & 0 & 0 & 0 \\
1 & 1 & 0 & 0 \\
0 & 0 & 1 & 1 \\
0 & 0 & 0 & 1
\end{bmatrix}$ \\ \hline

& $3$ & \usebox{\circfivefive} & $1$ & $\begin{bmatrix}
    0 & 0 & \omega & \omega & \omega \\
    0 & 1 & 1 & 1 & 0 \\
    1 & 0 & 0 & 1 & 1 \\
    \hline 
    0 & \omega & 0 & \omega & \omega \\
    \omega & \omega & 0 & \omega & 0 \\
    0 & 0 & 1 & 0 & 1\\
    0 & 0 & 0 & 1 & 1
\end{bmatrix}$ & $\begin{bmatrix}
1 & 0 & 0 & 0 \\
0 & 1 & 0 & 0 \\
1 & 1 & 1 & 0 \\
1 & 1 & 0 & 1 
\end{bmatrix}$ \\ \hline

& $6$ &  \usebox{\circsixsix} & $2$ & $\begin{bmatrix}
    0 & 0 & \omega & \omega & \omega \\
    0 & 1 & 1 & 1 & 0 \\
    1 & 0 & 0 & 1 & 1 \\
    \hline 
    0 & \omega & 0 & \omega & \omega \\
    \omega & \omega & 0 & \omega & 0 \\
    0 & 0 & 1 & 0 & 1 \\
    0 & 0 & 0 & 1 & 1
\end{bmatrix}$ & $\begin{bmatrix}
1 & 0 & 0 & 0 \\
0 & 1 & 0 & 0 \\
1 & 1 & 1 & 0 \\
1 & 0 & 0 & 1
\end{bmatrix}$ \\ \hline

& $7$ & \usebox{\circsevenseven} & $1$ & $\begin{bmatrix}
    0 & 0 & \omega & \omega & \omega \\
    0 & 1 & 1 & 1 & 0 \\
    1 & 0 & 0 & 1 & 1 \\
    \hline 
    0 & \omega & 0 & \omega & \omega \\
    \omega & \omega & 0 & \omega & 0 \\
    0 & 0 & 1 & 0 & 1 \\
    0 & 0 & 0 & 1 & 1
\end{bmatrix}$ & $\begin{bmatrix}
1 & 0 & 0 & 0 \\
1 & 1 & 0 & 0 \\
0 & 0 & 1 & 1 \\
1 & 1 & 0 & 1
\end{bmatrix}$ \\ \hline

\end{tabular}

\newpage

\newsavebox{\circoneoneone}
\newsavebox{\circtwotwotwo}
\newsavebox{\circthreethreethree}
\newsavebox{\circfourfourfour}
\newsavebox{\circfivefivefive}
\newsavebox{\circsixsixsix}
\newsavebox{\circsevensevenseven}
\newsavebox{\circeight}

\savebox{\circoneoneone}{
\begin{quantikz}
\lstick{} & \push{} & \push{}  & \permute{1,2,4,3,5} & \push{}  \\
\lstick{} & \push{} & \push{} & \push{} & \push{} \\
\lstick{} & \push{} & \push{} & \push{} & \push{} \\
\lstick{} & \push{} & \push{} &  \push{} & \push{} \\
\lstick{} & \push{} & \push{} & \push{} & \push{} 
\end{quantikz}
}

\savebox{\circtwotwotwo}{
\begin{quantikz}
\lstick{} & \push{} & \push{} & \permute{3,2,4,1,5} & \push{}  \\
\lstick{} & \push{} & \push{} & \push{} & \push{} \\
\lstick{} & \push{} & \push{} & \push{} & \push{} \\
\lstick{} & \push{} & \push{} &  \push{} & \push{} \\
\lstick{} & \push{} & \push{} & \push{} & \push{} 
\end{quantikz}
}

\savebox{\circthreethreethree}{
\begin{quantikz}
\lstick{} & \push{\begin{bsmallmatrix}
1 & 0 \\
1 & 1
\end{bsmallmatrix}} & \permute{1,2,4,3,5} & \push{}  \\
\lstick{} & \push{} & \push{} & \push{} \\
\lstick{} & \push{} & \push{} & \push{} \\
\lstick{} & \push{} &  \push{} & \push{} \\
\lstick{} & \push{\begin{bsmallmatrix}
    1 & 0 \\
    1 & 1
\end{bsmallmatrix}} & \push{} & \push{} 
\end{quantikz}
}

\savebox{\circfourfourfour}{
\begin{quantikz}
\lstick{} &  & \push{\begin{bsmallmatrix}
    1 & 0 \\
    1 & 1 
\end{bsmallmatrix}} & \permute{3,2,4,1,5} & \push{}  \\
\lstick{} &  & \push{} & \push{} & \push{} \\
\lstick{} &  & \push{} & \push{} & \push{} \\
\lstick{} &  & \push{} &  \push{} & \push{} \\
\lstick{} &  & \push{\begin{bsmallmatrix}
    1 & 0 \\
    1 & 1
\end{bsmallmatrix}} & \push{} & \push{} 
\end{quantikz}
}

\savebox{\circfivefivefive}{
\begin{quantikz}
\lstick{} & & \push{} & \permute{1,4,3,5,2,6} & \push{}  \\
\lstick{} & & \push{} & \push{} & \push{} \\
\lstick{} & & \push{\begin{bsmallmatrix}
    1 & 1 \\
    1 & 0
\end{bsmallmatrix}} & \push{} & \push{} \\
\lstick{} & & \push{} &  \push{} & \push{} \\
\lstick{} & & \push{} & \push{} & \push{} \\
\lstick{} & & \push{} & \push{} & \push{} 
\end{quantikz}
}

\savebox{\circsixsixsix}{
\begin{quantikz}
\lstick{} & & \push{} & \permute{1,4,5,3,2,6} & \push{}  \\
\lstick{} & & \push{} & \push{} & \push{} \\
\lstick{} & & \push{} & \push{} & \push{} \\
\lstick{} & & \push{} & \push{} & \push{} \\
\lstick{} & & \push{} & \push{} & \push{} \\
\lstick{} & & \push{\begin{bsmallmatrix}
    1 & 0 \\
    1 & 1
\end{bsmallmatrix}} & \push{} & \push{} 
\end{quantikz}
}

\savebox{\circsevensevenseven}{
\begin{quantikz}
\lstick{} & \push{} & \push{\permute{7,4,5,1,6,2,3}} & \push{} \\    
\lstick{} & \push{} & \push{} & \push{} \\  
\lstick{} & \push{\begin{bsmallmatrix}
          0 & 1 \\
          1 & 0 
      \end{bsmallmatrix}} & \push{} & \push{} \\
\lstick{} & \push{} & \push{} & \push{} \\
\lstick{} & \push{} & \push{} & \push{} \\         
 \lstick{} & \push{\begin{bsmallmatrix}
     1 & 0 \\
     1 & 1
 \end{bsmallmatrix}} & \push{} & \push{} \\
 \lstick{} & \push{} & \push{} & \push{} \\
\end{quantikz}
}

\savebox{\circeight}{
\begin{quantikz}
\lstick{} & \push{} & \push{\permute{2,1,6,7,5,4,3}} & \push{} \\
\lstick{} & \push{} & \push{} & \push{} \\
\lstick{} & \push{} & \push{} & \push{} \\
\lstick{} & \push{} & \push{} & \push{} \\
\lstick{} & \push{} & \push{} & \push{} \\
 \lstick{} & \push{\begin{bsmallmatrix}
     1 & 1 \\
     0 & 1
 \end{bsmallmatrix}} & \push{} & \push{} \\
 \lstick{} & \push{} & \push{} & \push{} \\
\end{quantikz}
}

\begin{tabular}{|c|c|c|c|c|c|} 
 \hline 
 \multicolumn{6}{|c|}{Optimising over local Clifford metric $|\pi^{ }_{e}|$}\\

 \hline
Code & Class & Physical Circuit $\pi$ & $ |\pi_{e}|$ & $\begin{bmatrix}
    \mathcal{G} \\ \hline \mathcal{B}
\end{bmatrix}$ &  $L(\pi, \mathcal{B})$  \\
\hline

$[[5,2,2]]$ & $2$ & \usebox{\circoneoneone} & $0$ & $\begin{bmatrix}
    0 & \omega & 0 & 0 & \omega \\
    \omega & 0 & \omega & \omega & \omega \\ 
    1 & 1 & 1 & 1 & 1 \\
     \hline 
     0 & 1 & 0 & 1 & 1 \\
     0 & 1 & 1 & 0 & 1 \\
     0 & 0 & \omega & 0 & \omega \\
     0 & 0 & 0 & \omega & \omega 
\end{bmatrix}$ &  $\begin{bmatrix}
    0 & 1 & 0 & 0 \\
    1 & 0 & 0 & 0 \\
    0 & 0 & 0 & 1 \\
    0 & 0 & 1 & 0
\end{bmatrix}$\\ \hline

& $4$ & \usebox{\circtwotwotwo} & $0$ & $\begin{bmatrix}
    0 & \omega & 0 & 0 & \omega \\
    \omega & 0 & \omega & \omega & \omega \\
    1 & 1 & 1 & 1 & 1 \\ \hline 
    0 & 1 & 0 & 1 & 1\\
    0 & 1 & 1 & 0 & 1 \\
    0 & 0 & \omega & 0 & \omega \\
    0 & 0 & 0 & \omega & \omega 
\end{bmatrix}$ & $\begin{bmatrix}
    1 & 1 & 0 & 0 \\
    1 & 0 & 0 & 0 \\
    0 & 0 & 0 & 1 \\
    0 & 0 & 1 & 1
\end{bmatrix}$\\ \hline

& $6$ & \usebox{\circthreethreethree} & $2$ & $\begin{bmatrix}
    0 & \omega & 0 & 0 & \omega \\
    \omega & 0 & \omega & \omega & \omega \\
    1 & 1 & 1 & \omega^{2} & 1 \\ \hline 
    0 & 1 & 0 & \omega^{2} & 1 \\
    0 & 1 & 1 & 0 & 1 \\
    0 & 0 & \omega & 0 & \omega \\
    0 & 0 & 0 & \omega & \omega 
\end{bmatrix}$ & $\begin{bmatrix}
0 & 1 & 0 & 0 \\
1 & 0 & 0 & 0 \\
1 & 0 & 0 & 1 \\
0 & 1 & 1 & 0 
\end{bmatrix}$ \\ \hline

& $9$ & \usebox{\circfourfourfour} & $2$ & $\begin{bmatrix}
    0 & \omega & 0 & 0 & \omega \\
    \omega & 0 & \omega & \omega & \omega \\
    1 & 1 & 1 & \omega^{2} & 1 \\ \hline 
    0 & 1 & 0 & \omega^{2} & 1 \\
    0 & 1 & 1 & 0 & 1 \\
    0 & 0 & \omega & 0 & \omega \\
    0 & 0 & 0 & \omega & \omega
\end{bmatrix}$ & $\begin{bmatrix}
1 & 1 & 0 & 0 \\
1 & 0 & 0 & 0 \\
1 & 0 & 0 & 1 \\
1 & 1 & 1 & 1
\end{bmatrix}$\\ \hline

$[[6,1,3]]$ & $2$ & \usebox{\circfivefivefive} & $1$ & $\begin{bmatrix}
    1 & 0 & 0 & 0 & 0 & \omega \\
    0 & 1 & 1 & \omega & \omega^{2} & 0 \\
    0 & \omega^{2} & 0 & \omega^{2} & \omega^{2} & \omega \\
    0 & 1 & \omega & 0 & \omega & \omega \\
    \omega & \omega^{2} & \omega & \omega & 0 & \omega^{2} \\ \hline 
    0 & 0 & \omega & \omega & \omega & 0 \\
    0 & 0 & 0 & 1 & \omega & \omega
\end{bmatrix}$ & $\begin{bmatrix}
    0 & 1 \\
    1 & 1
\end{bmatrix}$  \\ \hline 

& $3$ & \usebox{\circsixsixsix} & $1$ & $\begin{bmatrix}
    1 & 0 & 0 & 0 & 0 & \omega \\
    0 & 1 & 1 & \omega & \omega & 0 \\
    0 & \omega^{2} & 0 & 1 & \omega & \omega \\
    0 & 1 & \omega & 0 & \omega^{2} & \omega \\
    \omega & \omega^{2} & \omega & \omega & 0 & \omega^{2} \\ \hline 
    0 & 0 & \omega & \omega & \omega^{2} & 0 \\
    0 & 0 & 0 & \omega^{2} & \omega^{2} & \omega
\end{bmatrix}$ & $\begin{bmatrix}
    1 & 0 \\
    1 & 1
\end{bmatrix}$ \\ \hline

\end{tabular}

\newpage

\begin{tabular}{|c|c|c|c|c|c|} 
 \hline 
 \multicolumn{6}{|c|}{Optimising over local Clifford metric $|\pi^{ }_{e}|$}\\

 \hline
Code & Class & Physical Circuit $\pi$ & $ |\pi_{e}|$ & $\begin{bmatrix}
    \mathcal{G} \\ \hline \mathcal{B}
\end{bmatrix}$ &  $L(\pi, \mathcal{B})$  \\
\hline

$[[7,1,3]]$ & $2$ & \usebox{\circsevensevenseven} & $2$ & $\begin{bmatrix}
    1 & 0 & \omega & 0 & \omega & \omega^{2} & 0 \\
    1 & \omega^{2} & \omega & 0 & 0 & 0 & \omega^{2}\\
    1 & \omega^{2} & 0 & \omega & \omega & 0 & 0 \\
    \omega & 0 & \omega^{2} & 0 & 1 & \omega & 0 \\
    \omega & 1 & \omega^{2} & 0 & 0 & 0 & 1 \\
    \omega & 1 & 0 & \omega^{2} & 1 & 0 & 0 \\ \hline 
    0 & \omega^{2} & 0 & \omega & 0 & 0 & \omega^{2} \\
    0 & 1 & 0 & \omega^{2} & 0 & 0 & 1
\end{bmatrix}$ & $\begin{bmatrix}
    0 & 1 \\
    1 & 1 
\end{bmatrix}$ \\ \hline

& $3$ & \usebox{\circeight} & $1$ & $\begin{bmatrix}
    1 & 0 & \omega & 0 & \omega & \omega^{2} & 0 \\
    1 & \omega & \omega & \omega & 0 & 0 & 0 \\
    1 & 0 & 0 & \omega & \omega & 0 & \omega^{2} \\
    \omega & 0 & \omega^{2} & 0 & \omega^{2} & \omega & 0 \\
    \omega & 1 & \omega^{2} & \omega^{2} & 0 & 0 & 0 \\
    \omega & 0 & 0 & \omega^{2} & \omega^{2} & 0 & \omega \\ \hline 
    0 & \omega & 0 & \omega & 0 & 0 & \omega^{2} \\
    0 & 1 & 0 & \omega^{2} & 0 & 0 & \omega 
    
\end{bmatrix}$ & $\begin{bmatrix}
    0 & 1 \\
    1 & 0 
\end{bmatrix}$ \\ \hline

\end{tabular}

\section{Discussion}
\normalsize

The framework introduced in this work provides an exhaustive method for enumerating and optimising all automorphism-induced logical operations for small stabiliser codes. By combining symplectic analysis, code equivalence, and independent optimisation of SWAP and Clifford resources, we have demonstrated that logically identical operations can yield significantly different physical realisations. This highlights how algebraic symmetries can be exploited to identify circuits that are not only fault-tolerant but also resource-efficient in realistic implementations.

From a theoretical standpoint, Theorems \ref{thm 1} and \ref{thm transversal} fully characterise the physical Clifford circuits enacting logical Clifford gates incorporating choice of logical basis and choice of ``version'' of a stabiliser code. These insights have the benefit of cutting down the search space, so that we can exhaustively optimise over both the aforementioned degrees of freedom simultaneously. We applied our framework to give optimal circuits with respect to two particular metrics that we have motivated in the introduction, although other authors may prefer to optimise alternative metrics motivated by different tasks or physical constraints.

While the current numerical results are restricted to small codes with $n \leq 7$ and logical dimensions $k \leq 2$, the methods generalise naturally to larger codes. The principal limitation for larger codes ($n >7$ or $k>2$) arises from the combinatorial growth of both the automorphism group and the symplectic group $Sp(2k,2)$. Future work will focus on developing algorithmic approaches that remain tractable for larger and higher-dimensional codes.

Finally, developing an open-access repository based on this framework, building on tools such as \href{https://qecdb.org/}{\textbf{QECDB}} \cite{QECDB} and \href{https://codetables.de/}{\textbf{codetables.de}} \cite{Codetable}, and perhaps a software library in the vein of autqec \cite{autqec_code} would enable systematic exploration of code equivalence classes, automate Clifford decomposition and support the experimental design of near-term fault-tolerant quantum devices.

\section{Acknowledgements}
A.M.~and M.H~acknowledge funding from the Royal Society - Research Ireland (URF). A.M.~also acknowledges funding from the College of Science and Engineering at the University of Galway. M.H~thanks Mark Webster for interesting discussions on code automorphisms at the Mathematical Foundations of Quantum Advantage workshop held at Simon Fraser University funded by the Pacific Institute for the Mathematical Sciences (PIMS).

\bibliographystyle{plainurl}
\bibliography{biblio}

\onecolumn\newpage
\appendix

\section{Proofs of Propositions}

\subsection{Proof of Proposition \ref{eqn logical action}}\label{appx L(pi,B)}
\begin{proof}
    
We have defined $\mathcal{C}$ to be an additive $GF(4)$ code, meaning that its dual $\mathcal{C}^{\perp}$ can be defined with respect to the trace inner product (as described in section \ref{subsection logical action}), \begin{align}\mathcal{C}^{\perp} = \{ u \in GF(4)^{n}: Tr(u . \overline{c}) = 0 \text{ } \forall \text{ } c \in \mathcal{C} \}.\end{align}
The logical basis $\mathcal{B} \in \mathcal{C}^{\perp} \backslash \mathcal{C} $ consists of elements in $GF(4)^{n}$ which correspond to the logical $\mathcal{X}$ and logical $\mathcal{Z}$ type operators in $N(S) \backslash S$. Furthermore, $\Gamma(\mathcal{C}) \leq Ham(\mathcal{C})$ ensures that $\pi \in \Gamma(\mathcal{C})$ preserves trace inner products and as such preserves elements of $C^{\perp}$ \cite{calderbank1997quantum,roetteler}.
This means that given the rows of $\mathcal{B}$ are elements of $ \mathcal{C}^{\perp}\backslash \mathcal{C}$ and $\pi \in \Gamma(\mathcal{C})$, then the rows of the transformed logical basis, $\pi(\mathcal{B})$, will also be in $\mathcal{C}^{\perp} \backslash \mathcal{C}$. The preservation of $\mathcal{C}^{\perp}$ by $\Gamma(\mathcal{C})$ demonstrates that the action of $\pi$ corresponds to that of a logical Clifford operator, mapping elements corresponding to logical-$\mathcal{X}$ and logical-$\mathcal{Z}$ operators, either to themselves, or to other valid logical Pauli operators. 

We can calculate the logical operation implemented by $\pi$ via $\mathcal{D}_{\mathcal{B}} \odot \pi(\mathcal{B})^{T}$, where $$\mathcal{D}_{\mathcal{B}} = \begin{bmatrix}
    \mathcal{Z}_{L}^{(1)} \\
    \vdots \\
    \mathcal{Z}_{L}^{(k)} \\
    \mathcal{X}_{L}^{(1)} \\
    \vdots \\
    \mathcal{X}_{L}^{(k)}
\end{bmatrix},$$ and given $\pi(\mathcal{X}_{L}^{(1)}) = \mathcal{X}_{L}'^{(1)} $, we can construct the transformed basis,  
$$\pi(\mathcal{B})^{T} = \begin{bmatrix}
    \mathcal{X}_{L}'^{(1)} & 
    \dots 
    \mathcal{X}_{L}'^{(k)} &
    \mathcal{Z}_{L}'^{(1)} &
    \dots &
    \mathcal{Z}_{L}'^{(k)}
\end{bmatrix}.$$ 
 The operation $\odot$ is applied between the dual basis $\mathcal{D}_{\mathcal{B}}$ and the transformed basis $\pi(\mathcal{B})^{T}$ such that: 

\begin{align}\label{logical action odot rowwise}  \mathcal{D}_{\mathcal{B}} \odot \pi(\mathcal{B})^{T}  = \begin{bmatrix}
    Tr(\mathcal{Z}_{L}^{(1)} \cdot \overline{\mathcal{X}_{L}'^{(1)}}) & \dots & Tr(\mathcal{Z}_{L}^{(1)} \cdot \overline{\mathcal{Z}_{L}'^{(k)}}) \\
    \vdots & \vdots & \vdots \\ 
     Tr(\mathcal{X}_{L}^{(k)} \cdot \overline{\mathcal{X}_{L}'^{(1)}}) & \dots & Tr(\mathcal{X}_{L}^{(k)} \cdot \overline{\mathcal{Z}_{L}'^{(k)}})
\end{bmatrix} \end{align}

As we can see from Eq.~\eqref{logical action odot rowwise}, $(\mathcal{D}_{\mathcal{B}} \odot \pi(\mathcal{B})^{T})_{i,j} =1$ tells us exactly where $\pi(\mathcal{B}_{j})$ anti-commutes with 
\begin{enumerate}[i]
    \item $\mathcal{B}_{i+k}$, if $i \leq k$ 
   \item $\mathcal{B}_{i-k}$ if $i>k$.
\end{enumerate}

As such, row $i$ of $ \mathcal{D}_{\mathcal{B}} \odot \pi(\mathcal{B})^{T}$, gives exactly the action of $\pi(\mathcal{B}_{i})$, meaning that we can express the logical action of $\pi$ acting on the encoded space as shown in Proposition \ref{eqn logical action}, $$L(\pi,\mathcal{B}) = \mathcal{D}_{\mathcal{B}} \odot  \pi(\mathcal{B})^{T}.$$

\end{proof}

\subsection{Proofs of Proposition \ref{eqn L=ALAinv} and \ref{prop code equivalence no effect}}

For this section of proofs, we rely on the following facts:
\begin{itemize}
    \item For $F \in Sp(2k,2)$ and $\Omega_{r} = \begin{bmatrix}
        0 & \mathbb{I}_{r} \\
        \mathbb{I}_{r} & 0
    \end{bmatrix}$,
    \begin{enumerate}
    \item\label{fact 1} $F \Omega F^{T} = \Omega$
    \item\label{fact 2} $F^{-1} = \Omega F^{T} \Omega$
    \item\label{fact 3} $\Omega = \Omega^{-1} = \Omega^{T}$
    \item\label{fact 4} $\Omega^{2} = I$
\end{enumerate}
 \item For $\mathcal{W}, \mathcal{V} \in GF(4)^{2k \times n}$,
 \begin{enumerate}[a.]
     \item\label{fact 5} $\mathcal{W} \odot \mathcal{V}^{T} = \phi(\mathcal{W}) \Omega_{2n} \phi(\mathcal{V})^{T} = W\Omega_{2n}V^{T}$
 \end{enumerate}
\end{itemize}

\subsubsection{Proof of Proposition \ref{eqn L=ALAinv}}\label{appx ALA}
\begin{proof}
By Proposition \ref{eqn logical action}, $L(\pi, A\mathcal{B}) = \mathcal{D}_{A\mathcal{B}} \odot \pi(A\mathcal{B})^{T}$, for $A \in Sp(2k,2)$. We use the five facts listed at the beginning of this section to prove that $L(\pi,A\mathcal{B}) = (A^{-1})^{T}(L(\pi, \mathcal{B}))A^{T}$

$$L(\pi, A\mathcal{B}) = \mathcal{D}_{A\mathcal{B}} \odot \pi(A\mathcal{B})^{T} $$

$$= \underbrace{(\Omega_{2k}A\mathcal{B})}_{Eq.~\eqref{eqn destab}} \odot \pi(A\mathcal{B})^{T}$$ $$\underbrace{=}_{Fact \text{ } a.} (\Omega_{2k}A B) \Omega_{2n} (\underbrace{A\mathcal{B} F_{\pi}^{T}}_{Eq.~\eqref{eqn alpha:m pi to Fpi}})^{T} $$
$$= \Omega_{2k}A\underbrace{\Omega_{2k}\Omega_{2k}}_{Fact \text{ } 4. } \underbrace{D_{B}}_{Eq.~\eqref{eqn destab}} \Omega_{2n} F_{\pi}B^{T}A^{T}$$
$$\underbrace{=}_{Fact \text{ } a.} \underbrace{(A^{-1})^{T}}_{Fact \text{ } 2.}\underbrace{\mathcal{D}_{\mathcal{B}}}_{Eq.~\eqref{eqn destab}} \odot \underbrace{\pi(\mathcal{B})^{T}}_{Eq.~\eqref{eqn alpha:m pi to Fpi}}A^{T}$$
$$= (A^{-1})^{T} \underbrace{L(\pi, \mathcal{B})}_{Eq.~\eqref{eqn logical action}}A^{T}$$

 This proves that $$L(\pi, A\mathcal{B})=(A^{-1})^{T}L(\pi,\mathcal{B})A^{T}.$$
\end{proof}

\subsubsection{Proof of Proposition \ref{prop code equivalence no effect}}\label{appendix L(id)=L(tau)}

\begin{proof}
    
 Recall that by Eq.~\eqref{eqn alpha:m pi to Fpi}, $\alpha(\pi) = F_{\pi}$ is a homomorphism. This means that for $F_{a}, F_{b} \in Sp(2n,2)$ such that $a, b \in S_{3}\wr_{n}S_{n}$,  \begin{align}\label{eqn Ftau inverse}
            F_{a^{-1}}=F_{a}^{-1},
        \end{align} and
 \begin{align}\label{eqn alpha(ab) = FbFa}
            \alpha(abc) = \alpha(c)\alpha(b)\alpha(a) = F_{c}F_{b}F_{a}.
        \end{align}

Let $\tau \in \mathcal{T}$ such that $\tau: \mathcal{C} \rightarrow \mathcal{C}'$, where $\mathcal{C} \sim \mathcal{C}'$ i.e. $\mathcal{C}'$ is a distinct, equivalent ``version'' of $\mathcal{C}$. $\tau$ acts on the logical basis $\mathcal{B}$ as $\tau(\mathcal{B})$ and on $\pi \in \Gamma(\mathcal{C})$ via $\tau^{-1}\pi\tau$. Combining this action with Proposition \ref{eqn logical action}, we obtain Eq.~\eqref{eqn tau transform}: $L(\tau, \pi, \mathcal{B}) = \mathcal{D}_{\tau (\mathcal{B})} \odot ((\tau^{-1}\pi \tau)(\tau(\mathcal{B})))^{T}$, in which $L(\tau, \pi, \mathcal{B})$ denotes the logical action of the element $\pi' \in \Gamma(\mathcal{C}')$ on the encoded space of $\mathcal{C}'$ (where $\pi' = \tau^{-1}\pi\tau$ and $\mathcal{C}' = \tau(\mathcal{C})$).

Therefore, we acquire the following derivation:

$$L(\tau, \pi, \mathcal{B}) = \mathcal{D}_{\tau(\mathcal{B})}  \odot ((\tau^{-1}\pi\tau) (\tau(\mathcal{B})))^{T} $$

$$\underbrace{=}_{Fact \text{ } a.} \underbrace{\underbrace{\Omega_{2k}B}_{Eq.~\eqref{eqn destab}}F_{\tau}^{T}}_{Eq.~\eqref{eqn alpha:m pi to Fpi}} \Omega_{2n} (\underbrace{(\tau^{-1}\pi\tau)}_{Eq.~\eqref{eqn pi' = tauinv pi tau}}(\underbrace{BF_{\tau}^{T}}_{Eq.~\eqref{eqn alpha:m pi to Fpi}}))^{T}$$

$$= \underbrace{D_{B}}_{Eq.~\eqref{eqn destab}}F_{\tau}^{T} \Omega_{2n} (\underbrace{BF_{\tau}^{T}F_{\tau^{-1} \pi \tau}^{T}}_{Eq.~\eqref{eqn alpha:m pi to Fpi}})^{T}$$
$$= D_{B} F_{\tau}^{T} \Omega_{2n} F_{\tau^{-1}\pi\tau} F_{\tau} B^{T}$$
$$= D_{B}F_{\tau}^{T} \Omega_{2n} \underbrace{F_{\tau}F_{\pi}F_{\tau^{-1}}}_{Eq.~\eqref{eqn alpha(ab) = FbFa}}F_{\tau}B^{T}$$
$$=D_{B} (F_{\tau}^{T} \Omega_{2n} F_{\tau}) F_{\pi} (\underbrace{F_{\tau}^{-1}}_{Eq.~\eqref{eqn Ftau inverse}}F_{\tau})B^{T}$$
$$=D_{B} \underbrace{\Omega_{2n}}_{Fact \text{ } 4.} F_{\pi}B^{T}$$
$$\underbrace{=}_{Fact \text{ } a.} \mathcal{D}_{\mathcal{B}} \odot \underbrace{\pi(\mathcal{B})^{T}}_{Eq.~\eqref{eqn alpha:m pi to Fpi}}$$

This derivation proves that $L(\tau, \pi, \mathcal{B}) = L(\pi, \mathcal{B})$ i.e. code equivalence leaves logical operations invariant. 
\end{proof}

\section{Constructing the transversal subgroup $\mathcal{T}$ via 
\textit{GAP} and \textit{Magma}}\label{appx: transversal GAP}

The steps for computing $\mathcal{T}$ using the computer algebra systems \textbf{Magma} and \textbf{GAP} are as follows:
\begin{enumerate}
    \item Find the generators of the Hamming group in disjoint cycle notation
    \item Find the generators of the automorphism group in disjoint cycle notation (via \textbf{Magma})
    \item Construct the full $Ham(\mathcal{C})$ and $\Gamma(\mathcal{C})$ from their generators (via \textbf{GAP})
    \item Construct $\mathcal{T}$ by taking representative $\tau$ from each coset of $Ham(\mathcal{C})/\Gamma(\mathcal{C})$ (via \textbf{GAP})
\end{enumerate} We will illustrate the above steps by using the example of the $[[4,2,2]]$ code.

\subsubsection{Step $1$: Find the generators of the Hamming group in disjoint cycle notation.}

For the sake of utility, we express the elements of $Ham(\mathcal{C})$ as elements of $S_{3n}=S_{12}$. We can break down the generators of the Hamming group into $3$ types : \begin{enumerate}[i)]
    \item The generators of $S_{n}$
    \begin{itemize}
        \item the generators of $S_{4}$ are: $$(4,5,6,1,2,3,7,8,9,10,11,12)$$ $$(10,11,12,1,2,3,4,5,6,7,8,9)$$
    \end{itemize}
    \item The generators of scaling elements
    \begin{itemize}
        \item the generators of scaling elements of $S_{12}$ are $$(\textcolor{red}{2,3,1},4,5,6,7,8,9,10,11,12)$$ $$(1,2,3,\textcolor{red}{5,6,4},7,8,9,10,11,12)$$ $$(1,2,3,4,5,6,\textcolor{red}{8,9,7},10,11,12)$$ $$(1,2,3,4,5,6,7,8,9,\textcolor{red}{11,12,10})$$
    \end{itemize}
    \item The generators of conjugation elements
    \begin{itemize}
        \item the generators of conjugation as elements of $S_{12}$ are $$(\textcolor{red}{1,3,2},4,5,6,7,8,9,10,11,12)$$ $$(1,2,3,\textcolor{red}{4,6,5},7,8,9,10,11,12)$$ $$(1,2,3,4,5,6,\textcolor{red}{7,9,8},10,11,12)$$ $$(1,2,3,4,5,6,7,8,9,\textcolor{red}{10,12,11})$$
    \end{itemize}
\end{enumerate}

Combining the above three Hamming generator-types, we find that there are $10$ generators of $Ham(\mathcal{C})$,

 $$Ham(\mathcal{C}) \eqsim \text{ } < \begin{bmatrix}
     (4,5,6,1,2,3,7,8,9,10,11,12) \\
 (10,11,12,1,2,3,4,5,6,7,8,9) \\ 
 (\textcolor{red}{2,3,1},4,5,6,7,8,9,10,11,12) \\
 (1,2,3,\textcolor{red}{5,6,4},7,8,9,10,11,12) \\
 (1,2,3,4,5,6,\textcolor{red}{8,9,7},10,11,12) \\(1,2,3,4,5,6,7,8,9,\textcolor{red}{11,12,10}) \\
 (\textcolor{red}{1,3,2},4,5,6,7,8,9,10,11,12) \\
 (1,2,3,\textcolor{red}{4,6,5},7,8,9,10,11,12) \\(1,2,3,4,5,6,\textcolor{red}{7,9,8},10,11,12) \\
 (1,2,3,4,5,6,7,8,9,\textcolor{red}{10,12,11})\end{bmatrix} >.$$

 We express these generators in disjoint cycle notation, as this format is compatible with \textbf{GAP} input,

 $$Ham(\mathcal{C}) \eqsim \text{} < \begin{bmatrix}
     (1,4)(2,5)(3,6) \\
     (1,10,7,4)(2,11,8,5)(3,12,9,6) \\
     (1,2,3) \\
     (4,5,6) \\
     (7,8,9) \\
     (10,11,12) \\
     (2,3) \\
     (5,6) \\
     (8,9) \\
     (11,12)
 \end{bmatrix}  >.$$

\subsubsection{Step $2$:  Find the generators of the automorphism group in disjoint cycle notation (via \textbf{Magma})}

 We can obtain the generators of the automorphism group $\Gamma(\mathcal{C})$ via Magma. In order to construct $\Gamma(\mathcal{C})$, \textbf{Magma} requires us to input the generator $\mathcal{G}$ of the desired code $\mathcal{C}$. In the case of the $[[4,2,2]]$ code, we can use the following generator as input: $$\mathcal{G} = \begin{bmatrix}
     1 & 1 & 1 & 1 \\
     \omega & \omega & \omega & \omega 
 \end{bmatrix}.$$ 

 Below is a verbatim \textbf{Magma} command, specifying the $[[4,2,2]]$ code and obtaining its associated automorphism group $\Gamma(C)$. 
\begin{center}
\begin{verbatim}
    F<w>:=GF(4); 
    G:=Matrix(F,2,4,[1,1,1,1,w,w,w,w]);
    C:=AdditiveCode(GF(2),G); 
    Q:=QuantumCode(C); 
    AutomorphismGroup(Q);
\end{verbatim}
\end{center}

From this command, \textbf{Magma} should output 

\begin{center}
    \begin{verbatim}
    Permutation group acting on a set of cardinality 12 
    Order = 144 = 2^4 * 3^2 
    (7, 10)(8, 11)(9, 12) 
    (4, 10)(5, 11)(6, 12) 
    (2, 3)(5, 6)(8, 9)(11, 12) 
    (1, 3)(4, 6)(7, 9)(10, 12)  
    (1, 5, 3, 4, 2, 6)(7, 8, 9) 
    (10, 11, 12)
   \end{verbatim}
\end{center}

From the above output, we find the generators of the automorphism group $\Gamma(\mathcal{C})$ in disjoint cycle notation. 

$$\Gamma(\mathcal{C})= < \begin{bmatrix}
    (7, 10)(8, 11)(9, 12) \\
     (4, 10)(5, 11)(6, 12) \\
       (2, 3)(5, 6)(8, 9)(11, 12) \\
       (1, 3)(4, 6)(7, 9)(10, 12) \\
       (1, 5, 3, 4, 2, 6)(7, 8, 9) \\
       (10, 11, 12)
\end{bmatrix} >$$

Alternatively, we could also make use of Sayginel's autqec python package \cite{autqec_code} which maps the stabiliser generators to a related binary linear code and uses MAGMA or Bliss software to compute its automorphism group, which corresponds to the the physical implementation of logical Cliffords. This software also computes appropriate Pauli corrections to the physical circuits.

\subsubsection{Step $3$: Construct the full $Ham(\mathcal{C})$ and $\Gamma(\mathcal{C})$ from their generators (via \textbf{GAP})}
Now, we instruct \textbf{GAP} to construct the full Hamming group $Ham(\mathcal{C})$ from its generators in disjoint cycle notation.

\begin{verbatim}
    G:=Group(   (1,4)(2,5)(3,6),
    (1,10,7,4)(2,11,8,5)(3,12,9,6),
    (1,2,3),
    (4,5,6),
    (7,8,9),
    (10,11,12),
    (2,3),
    (5,6),
    (8,9),
    (11,12));
\end{verbatim}

We do the same for the automorphism group $\Gamma(C)$.

\begin{verbatim}
    A:= (   (7, 10)(8, 11)(9, 12), 
    (4, 10)(5, 11)(6, 12), 
    (2, 3)(5, 6)(8, 9)(11, 12), 
    (1, 3)(4, 6)(7, 9)(10, 12),  
    (1, 5, 3, 4, 2, 6)(7, 8, 9), 
    (10, 11, 12));
\end{verbatim}

\subsubsection{Step $4$: Construct $\mathcal{T}$ by taking representative $\tau$ from each coset of $Ham(\mathcal{C})/\Gamma(\mathcal{C})$ (via \textbf{GAP}) }

We can now use \textbf{GAP} to construct the transversal subgroup $\mathcal{T}$ by taking a representative $\tau$ of each coset $Ham(\mathcal{C}) / \Gamma(\mathcal{C})$. This is achieved by the following input command into \textbf{GAP}:

\begin{verbatim}
    t:=RightTransversal(G,A);
    List(t);
\end{verbatim}

\section{Construction of $F_{\pi}$ from $\pi \in S_{3} \wr_{n} S_{n}$}\label{appx symplectic construction}

The binary symplectic construction is similar to the formalism introduced by Dehaene and De Moor \cite{Dehaene}. Let the condensed symplectic form equivalent to $\pi$ be denoted as $\pi \eqsim (\sigma;\rho_{1}, \dots, \rho_{n}) \in (S_{3})^{n} \rtimes S_{n}$, where $\rho_{i} \in S_{3}$ corresponds to a local (single-qubit) Clifford action and $\sigma$ is the global permutation action on the $n$ blocks of $S_{3}$. From Eq.~\eqref{eqn alpha:m pi to Fpi}, we know that $\alpha$ is a homomorphism, where $\alpha(\pi) = F_{\pi}$. This homomorphism consists of two mappings $\beta$ and $\sigma$; $\beta$ is the local mapping acting on each $\rho_{i}$ and $\sigma$ is the global permutation acting on the positions of the $n$ blocks.

Each local $S_{3}$ factor $\rho_{i}$ is isomorphic to the single-qubit Clifford group modulo phase, which we can express as elements of $Sp(2,2)$. For a block $i \in \{1, \dots, n\}$, consisting of $3$ co-ordinates, the local map $\beta : S_3 \to Sp(2,2)$
corresponds to the action of each of the $6$ single-qubit Clifford operators on the Pauli basis \cite{Gottesman1998,Dehaene,AaronsonGottesman}. \begin{align}\label{eqn beta} \beta(\rho_{i})= \begin{bmatrix}
        a_{i} & b_{i} \\
    c_{i} & d_{i} 
    \end{bmatrix} \in Sp(2,2).\end{align} There are $|S_{3}|=6$ possible arrangements for each of the blocks $\rho_{i}$, each corresponding to a unique element $\beta(\rho_{i}) \in Sp(2,2)$. These $6$ possible arrangements and their corresponding $2 \times 2$ symplectic matrix be read off the table below:

   \begin{center}
   \tiny
   
\begin{tabular}{|p{1in}|p{1in}|p{1in}|p{1in}|p{1in}|}\hline \label{table internal permutations}

  $\rho_{i}$ &  $F(4)^{*}$ multiplication & Conjugation & $\alpha_{1}(\rho_{i}) \in Sp(2,2)$ & Corresponding Clifford\\

 \hline
 
 $1,2,0$ & $1$  & No & $\begin{bmatrix}
     1 & 0 \\
     0 & 1
 \end{bmatrix}$ & $I$ \\ 
 \hline
 $1,0,2$ & $1$  & Yes & $\begin{bmatrix}
     1 & 1 \\
     0 & 1
 \end{bmatrix}$  & $HSH$\\ 
 \hline
 $2,0,1$ & $\omega$  & No & $\begin{bmatrix}
     0 & 1 \\
     1 & 1
 \end{bmatrix}$ & $SH$ \\ 
 \hline
 $0,2,1$ & $\omega$ & Yes &  $\begin{bmatrix}
     1 & 0 \\
     1 & 1
 \end{bmatrix}$ & $S$ \\
 \hline
 $0,1,2$ & $\overline{\omega}$ & No & $\begin{bmatrix}
     1 & 1 \\
     1 & 0
 \end{bmatrix}$ & $HS$ \\
 \hline
 $2,1,0$ & $\overline{\omega}$ & Yes & $\begin{bmatrix}
     0 & 1 \\
     1 & 0
 \end{bmatrix}$ & $H$\\
 \hline
\end{tabular}
\end{center}

We can construct the block matrix $M \in Sp(2n,2)$ from $\beta(\rho_{i})$.  Each $\beta(\rho_{i}) \in Sp(2,2)$ preserves the single-qubit symplectic form and as a result, the full expression $M$ preserves the global symplectic form $\Omega_{2n} = \begin{bmatrix}
        0 & \mathbb{I}_{n} \\
        \mathbb{I}_{n} & 0
    \end{bmatrix}$. We define this $M$ as:
    \begin{align}\label{eqn m} M = \left[ \begin{array}{c|c}
A & B \\
\hline 
C & D
\end{array} \right], \end{align}
such that each $A,B,C,D \in \mathbb{F}_{2}^{2n \times 2n}$ and are constructed via $H_{ii}=1 \iff h_{i}=1$ for each $H \in \{A,B,C,D\}$ and their corresponding $h_{i} \in \{a_{i},b_{i}, c_{i}, d_{i}\}$. For example $A_{ii}=1 \iff a_{i} = 1$. In this form $M$ can represent all local Clifford actions and acts on $(x|z)^{T} \in \mathbb{F}_{2}^{2n}$ to produce $(x'|z')^{T}$ according to the following mechanism:

 \begin{align}
    M(x|z)^{T} = \left[ \begin{array}{ccc|ccc}
    a_{11} & \dots & a_{1n}  & b_{11} & \dots & b_{1n} \\
    \dots & \dots & \dots & \dots & \dots & \dots \\
    a_{n1} & \dots & a_{nn} & b_{n1} & \dots & b_{nn} \\
    \hline
    c_{11} & \dots & c_{1n} & d_{11} & \dots & d_{1n} \\
     \dots & \dots & \dots & \dots & \dots & \dots \\
    c_{n1} & \dots & c_{nn} & d_{n1} & \dots & d_{nn}
    \end{array} \right] \begin{bmatrix}
        x_{1} \\
        \dots \\
        x_{n} \\
        \hline 
        z_{1} \\
        \dots \\
        z_{n}
    \end{bmatrix} = \begin{bmatrix}
        a_{11}x_{1} + \dots + b_{1n}z_{n} \\
        \dots \\
        a_{n1}x_{1} + \dots + b_{nn}z_{n} \\
        \hline 
        c_{11}x_{1} + \dots + d_{1n}z_{n} \\
        \dots \\
        c_{n1}x_{1} + \dots + d_{nn}z_{n} \\
    \end{bmatrix} = (x'|z')^{T},
\end{align}  where \begin{align}
    x' = A.x + B.z
\end{align} \begin{align}
    z' = C.x + D.z.
\end{align}

The permutation $\sigma \in S_{n}$ acts on the $n$ blocks and treats each block as a single co-ordinate. The permutation matrix corresponding to $\sigma$ is $P_{\sigma} \in \mathbb{F}_{2}^{n}$. We can also express this permutation in terms of a block permutation matrix: $N \in Sp(2n,2)$
\begin{align}\label{eqn n}N= \left[ \begin{array}{c|c}
    P_{\sigma^{-1}} & 0 \\
    \hline
    0 & P_{\sigma^{-1}}
\end{array} \right]. \end{align}

The full symplectic representation of $\pi$ can be calculated as $F_{\pi}=NM$.  Both $N$ and $M$ preserve the symplectic form
$\Omega_{2n} $ 
\cite{Gottesman1997, Dehaene}, 
and therefore their product $F_\pi$ also satisfies:
\begin{align}
    F_\pi \, \Omega_{2n} \, F_\pi^{T} = \Omega_{2n},
\end{align}
which ensures that $F_\pi \in Sp(2n,2)$.

\subsection{Example}\label{subsec example converting pi to fpi}

In section \ref{section example with fig}, we give the example of the $[[4,2,2]]$ code. We found that \begin{center} $\pi^{46} \eqsim$
\begin{quantikz}[background color=black!5!white]
\lstick{}  & \push{  \begin{bsmallmatrix}
    1 & 1 \\
    0 & 1
\end{bsmallmatrix}} & \permute{2,3,4,1} & \push{} \\
\lstick{} &  \push{  \begin{bsmallmatrix}
    1 & 1 \\
    0 & 1
\end{bsmallmatrix}} & \push{} & \push{} \\
\lstick{} &  \push{  \begin{bsmallmatrix}
    1 & 1 \\
    0 & 1
\end{bsmallmatrix}}  & \push{} & \push{} \\
\lstick{} &   \push{  \begin{bsmallmatrix}
    1 & 1\\
    0 & 1
\end{bsmallmatrix}} & \push{} & \push{}.
\end{quantikz}
\end{center} 

We can express $\pi^{46}$ as an element of $ \in S_{3} \wr_{n} S_{n}$, $$\pi^{46} = (4,6,5,7,9,8,10,12,11,1,3,2).$$ 
The automorphism is arranged in $n$ blocks of $3$, as described in colour below: 
$$\pi^{46} = (\textcolor{red}{4,6,5} \textcolor{blue}{,7,9,8} \textcolor{green}{,10,12,11}\textcolor{magenta}{,1,3,2}).$$
We refer to the ordering of the $3$ elements within each block as the \textit{internal permutation} and we refer to the overall ordering of the $n$ blocks as the \textit{global permutation}. We begin by describing the global permutation, demonstrated by the labels shown below:
   $$\pi^{46} = (\underset{2}{\textcolor{red}{4,6,5}} \underset{3}{\textcolor{blue}{,7,9,8}} \underset{4}{\textcolor{green}{,10,12,11}}\underset{1}{\textcolor{magenta}{,1,3,2}}).$$
Clearly the permutation part of $\pi^{46}$ is $\sigma= (2,3,4,1)$. We construct the inverse of this permutation, $\sigma^{-1} = (4,1,2,3)$, meaning that $$P_{\sigma^{-1}} = \begin{bmatrix}
       0 & 0 & 0 & \textcolor{magenta}{1} \\
       \textcolor{red}{1} & 0 & 0 & 0 \\
       0 & \textcolor{blue}{1} & 0 & 0 \\
       0 & 0 & \textcolor{green}{1} & 0 
   \end{bmatrix}.$$ We use $P_{\sigma}^{-1}$ to construct the global permutation as described in Eq.~\eqref{eqn n}
   $$N = \left[ \begin{array}{cccc|cccc}
     0 & 0 & 0 & \textcolor{magenta}{1} & 0 & 0 & 0 & 0 \\
     \textcolor{red}{1} & 0 & 0 & 0 & 0 & 0 & 0 & 0 \\
     0 & \textcolor{blue}{1} & 0 & 0 & 0 & 0 & 0 & 0 \\
     0 & 0 & \textcolor{green}{1} & 0 & 0 & 0 & 0 & 0 \\ \hline
     0 & 0 & 0 & 0 & 0 & 0 & 0 & \textcolor{magenta}{1} \\
     0 & 0 & 0 & 0 & \textcolor{red}{1} & 0 & 0 & 0 \\
     0 & 0 & 0 & 0 & 0 & \textcolor{blue}{1} & 0 & 0 \\
     0 & 0 & 0 & 0 & 0 & 0 & \textcolor{green}{1} & 0
   \end{array}\right].$$
We now deetermine the internal permutation of each block. We can rewrite the values $\mod 3$ to obtain the $n=4$ blocks $\rho_{i} \in S_{3}$. We can see that the first block is $(4,6,5) \equiv (1,0,2) \mod 3 \implies \rho_{1} = (1,0,2)$ and as such, upon consultation with table \ref{table internal permutations}, we observe that $\beta(\rho_{1})=\begin{bmatrix}
    1 & 1 \\
    0 & 1
    \end{bmatrix}$ which corresponds to a physical $HSH$ gate acting on qubit $1$.

In this manner and according to Eq.~\eqref{eqn beta}, construct the single qubit symplectic matrices corresponding to each $\beta(\rho_{i})$: 
$$\beta(\rho_{1}) = \textcolor{red}{\begin{bmatrix}
    1 & 1 \\
    0 & 1
\end{bmatrix}}$$

$$\beta(\rho_{2})= \textcolor{blue}{\begin{bmatrix}
    1 & 1 \\
    0 & 1
\end{bmatrix}}$$

$$\beta(\rho_{3})=\textcolor{green}{\begin{bmatrix}
    1 & 1 \\
    0 & 1 
\end{bmatrix}}$$

$$\beta(\rho_{4})=\textcolor{magenta}{\begin{bmatrix}
    1 & 1 \\
    0 & 1
\end{bmatrix}}$$ We can now construct $M \in Sp(2n,2)$ according to Eq.~\eqref{eqn m} $$M = \left[\begin{array}{cccc|cccc}
    \textcolor{red}{1} & 0 & 0 & 0 & \textcolor{red}{1} & 0 & 0 & 0 \\
    0 & \textcolor{blue}{1} & 0 & 0 & 0 & \textcolor{blue}{1} & 0 & 0 \\
    0 & 0 & \textcolor{green}{1} & 0 & 0 & 0 & \textcolor{green}{1} & 0 \\
    0 & 0 & 0 & 1 & 0 & 0 & 0 & \textcolor{magenta}{1} \\ \hline
    0 & 0 & 0 & 0 & \textcolor{red}{1} & 0 & 0 & 0 \\
    0 & 0 & 0 & 0 & 0 & \textcolor{blue}{1} & 0 & 0 \\
    0 & 0 & 0 & 0 & 0 & 0 & \textcolor{green}{1} & 0 \\
    0 & 0 & 0 & 0 & 0 & 0 & 0 & \textcolor{magenta}{1} 
\end{array} \right].$$

We use $N$ and $M$ to construct the symplectic matrix $$F_{\pi^{46}} = NM = \begin{bmatrix}
    0 & 0 & 0 & 1 & 0 & 0 & 0 & 1 \\
    1 & 0 & 0 & 0 & 1 & 0 & 0 & 0 \\
    0 & 1 & 0 & 0 & 0 & 1 & 0 & 0 \\
    0 & 0 & 1 & 0 & 0 & 0 & 1 & 0 \\
    0 & 0 & 0 & 0 & 0 & 0 & 0 & 1 \\
    0 & 0 & 0 & 0 & 1 & 0 & 0 & 0 \\
    0 & 0 & 0 & 0 & 0 & 1 & 0 & 0 \\
    0 & 0 & 0 & 0 & 0 & 0 & 1 & 0 \\
\end{bmatrix}.$$

We demonstrate how $F_{\pi^{46}}$ can be used to derive the logical effect of $\pi^{46}$ acting on the basis $\mathcal{B}$, given in the example in section \ref{section example with fig}. $F_{\pi^{46}}$ acts on the basis $\mathcal{B} = \begin{bmatrix}
    1 & 1 & 0 & 0 \\
    \omega & \omega & 0 & 0 \\
    \omega & 0 & \omega & 0 \\
    1 & 0 & 1 & 0
\end{bmatrix}$ by the following action: $$BF_{\pi^{46}}^{T} = \begin{bmatrix}
    1 & 1 & 0 & 0 & 0 & 0 & 0 & 0 \\
    0 & 0 & 0 & 0 & 1 & 1 & 0 & 0 \\
    0 & 0 & 0 & 0 & 1 & 0 & 1 & 0 \\
    1 & 0 & 1 & 0 & 0 & 0 & 0 & 0
\end{bmatrix} \begin{bmatrix}
    0 & 1 & 0 & 0 & 0 & 0 & 0 & 0 \\
    0 & 0 & 1 & 0 & 0 & 0 & 0 & 0 \\
    0 & 0 & 0 & 1 & 0 & 0 & 0 & 0 \\
    1 & 0 & 0 & 0 & 0 & 0 & 0 & 0 \\
    0 & 1 & 0 & 0 & 0 & 1 & 0 & 0 \\
    0 & 0 & 1 & 0 & 0 & 0 & 1 & 0 \\
    0 & 0 & 0 & 1 & 0 & 0 & 0 & 1 \\
    1 & 0 & 0 & 0 & 1 & 0 & 0 & 0 
\end{bmatrix} = \begin{bmatrix}
    0 & 1 & 1 & 0 & 0 & 0 & 0 & 0 \\
    0 & 1 & 1 & 0 & 0 & 1 & 1 & 0 \\
    0 & 1 & 0 & 1 & 0 & 1 & 0 & 1 \\
    0 & 1 & 0 & 1 & 0 & 0 & 0 & 0
\end{bmatrix}.$$

We have shown that $$\pi^{46}(\mathcal{B})= \begin{bmatrix}
    0 & 1 & 1 & 0 \\
    0 & \omega^{2} & \omega^{2} & 0 \\
    0 & \omega^{2} & 0 & \omega^{2} \\
    0 & 1 & 0 & 1
\end{bmatrix},$$ which confirms the logical operation $L(\pi^{46},\mathcal{B})$ as
 $\mathcal{B}_{1}=(0,1,1,0) \in [XZ]$, $\mathcal{B}_{2}=(0, \omega^{2}, \omega^{2}, 0) \in [YY]$, $\mathcal{B}_{3}=(0, \omega^{2}, 0, \omega^{2}) \in [ZZ]$ and $\mathcal{B}_{4}=(0,1,0,1) \in [IZ]$, which confirms that $$L(\pi^{46},\mathcal{B}) = \begin{bmatrix}
    1 & 1 & 0 & 0 \\
    0 & 1 & 0 & 0 \\
    0 & 1 & 1 & 0 \\
    1 & 1 & 1 & 1
\end{bmatrix}.$$

\end{document}